\newtheorem{theorem}{\bf Theorem}
\newtheorem{proposition}{\bf Proposition}
\newtheorem{lemma}{\bf Lemma}
\newtheorem{definition}{\bf Definition}
\DeclareMathOperator*{\argmax}{arg\,max}
\newcommand{\X}{\mathbb{X}}
\newcommand{\Y}{\mathbb{Y}}
\begin{document}

\title{Maximal Information Leakage from Quantum Encoding of Classical Data}

\author{Farhad Farokhi}
 \email{farhad.farokhi@unimelb.edu.au}
\affiliation{Department of Electrical and Electronic Engineering, The University of Melbourne, Parkville, VIC 3010, Australia}

\date{\today}

\begin{abstract}
A new measure of information leakage for quantum encoding of classical data is defined. An adversary can access a single copy of the state of a quantum system that encodes some classical data and is interested in correctly guessing a general randomized or deterministic function of the data (e.g., a specific feature or attribute of the data in quantum machine learning) that is unknown to the security analyst. The resulting measure of information leakage, referred to as maximal quantum leakage, is  the multiplicative increase of the probability of correctly guessing any function of the classical data upon observing measurements of the quantum state. Maximal quantum leakage is shown to satisfy post-processing inequality (i.e., applying a quantum channel reduces information leakage) and independence property (i.e., leakage is zero if the quantum state is independent of the classical data), which are fundamental properties required for privacy and security analysis. It also bounds accessible information. Effects of global and local depolarizing noise models on the maximal quantum leakage are established. 
\end{abstract}

\keywords{Quantum information theory, Quantum communication, Maximal leakage}

\maketitle

\section{Introduction}
This paper deals with quantifying the ``amount'' of classical information that can be leaked from a quantum system whose state encodes the said classical information. This is a basic question that arises in security and privacy analysis of quantum computing systems. In a setting where classical information is encoded into the state of a quantum system, e.g., quantum machine learning~\cite{biamonte2017quantum}, an adversary may either access the system legitimately or maliciously (by hacking), perform measurements on the state of the system, and extract private or secret information. This has, in part, motivated development of privacy-preserving quantum computing~\cite{zhou2017differential,hirche2023quantum,aaronson2019gentle,nuradha2023quantum,farokhi2023privacy}. This measure of information leakage can also be used to investigate security in communication or transmission of classical data over insecure quantum channels, where an eavesdropper may attempt to extract some classical information by performing measurements on the communicated qubits. This setup is akin to quantum wiretap channels~\cite{cai2004quantum}; see Section~\ref{sec:conc} for more information.

Any useful measure of information leakage must satisfy a few requirements~\cite{issa2019operational,farokhi2021measuring}. First, and foremost, the measure should possess an operational interpretation. As stated in~\cite[p.\,313]{wilde2013quantum}, ``the ultimate test for whether we truly understand an information measure is if it is the answer to some operational task.'' This will enable a designer or analyst to explain what guarantees can be extracted from minimizing or bounding the measure of information leakage. Second, assumptions regarding the adversary must be minimal so that a large family of adversaries can be modelled and analyzed. For instance, it is customary to assume that the adversary seeks to estimate the entire data accurately while, in practice, the adversary might only be seeking to extract as much information as possible\footnote{Akin to the so-called fishing expedition, which refers to non-specific search for information.} or might only be interested in estimating subsets of the data or specific features that may be unknown to the security analyst. Third, the measure should satisfy certain properties, such as post-processing inequality (i.e., further processing of the quantum system by an arbitrary quantum channel must reduce information leakage) and independence property (i.e., leakage is zero if the quantum state is independent of the classical data). The former enable the analyst to make statements that are independent of the computational power of the adversary, i.e., post processing by advanced computing techniques or powerful machines should not increase information leakage, while the latter implies that the measure of information leakage is not conservative, i.e., assigning a risk of information leakage to situations where the adversary is guaranteed to not gain any insight. Finally, the measure of information leakage should align with intuition, e.g., noisy quantum circuits must reduce the information leakage. 

Common measures of information leakage, while satisfying some of these properties, often fail to meet all requirements. For instance, in quantum computing and information theory, accessible information and its upper bound, Holevo information, do not meet the requirement on minimal assumptions on the adversary. They can be only used in the context that the adversary is interested in estimating the entirety of the classical data. This is because accessible information and Holevo information are formulated to study reliable information transmission~\cite{holevo1973bounds}, not information leakage in security analysis. Also, it is well-understood that mutual information is not a good measure of information leakage in security and privacy~\cite{issa2019operational}. Again, the operational interpretation of mutual information stems from communication and compression, which differs from security and privacy. In compression, for instance, we must be able to decode the entire information without any loss. However, in security, an adversary may not be interested in extracting the entire classical information. It may merely want to extract some or any private information. In the classical setting, these observations have motivated moving away from mutual information for measuring information leakage in privacy and security frameworks~\cite{liao2019tunable,diaz2019robustness,issa2019operational,farokhi2021measuring}.

This paper presents a new metric or measure for information leakage from quantum systems referred to as \textit{maximal quantum leakage}. The measure is built upon a similar classical notion of information leakage known as maximal leakage~\cite{issa2019operational}. The adversary has access to a single copy of the state $\rho_A^X$ of quantum system $A$ that encodes some classical data $X$, which is assumed to be private and must be kept secure. The adversary is interested guessing or estimating a general, possibly randomized function of the original classical data $X$, called $Z$. The adversary's intention or target, i.e., the underlying randomized function of the classical data, is not known to the designer or analyst. Therefore, the measure of information leakage must be maximized over all possible choices of this function. This motivates the use of the term \textit{maximal} in maximal quantum leakage. This threat model captures a large family of potential adversaries and thus minimizes the assumptions made regarding the adversary's intent. The adversary can perform measurements on the state of the quantum system to observe a random variable $Y$, i.e., the outcome of the measurements. The adversary then attempts to guess $Z$ based on $Y$ and verify whether the choice is correct. For instance, in the security framework, this could model guessing an individual's password and attempting to log in using the guess~\cite{issa2019operational}. However, other interpretations can be provided, e.g., this could capture guessing someone's private information information to use that for phishing attacks against them. The adversary's goal is to maximize the probability of the correctly guessing $Z$. That is, the adversary attempts to extract some information regarding $X$, modelled by $Z$, with high certainty. The information leakage measures the worst-case (i.e., maximal) ratio of the probability of correctly guessing $Z$ with access to $Y$ and without access to $Y$. Therefore, the measure investigate cases where the probability of correctly guessing $Z$ increases considerably based on access to $Y$, i.e., $Y$ leaks considerable information about $Z$, which is in turn a function of the private data $X$. This provides a natural interpretation for the information leakage: 
\begin{quote}
    The multiplicative increase in the probability of correctly guessing any general random or deterministic function of the private data upon access to the quantum encoding of the data is upper-bounded by the maximal quantum leakage.
\end{quote}
Important properties for the maximal quantum leakage is established. First, maximal quantum leakage is zero if the quantum encoding is indistinguishable, i.e., if the quantum state is independent of the classical data. This is a natural property as otherwise the measure of leakage acts conservative, i.e., it assigns a non-zero leakage to a scenario that possesses no risk. Second, maximal quantum leakage admits post-processing inequality, i.e., maximal leakage reduces if the state of the quantum system is manipulated by an arbitrary quantum channel. This is a useful property in privacy and security analysis because it implies that we only need to compute the information leakage at the beginning of the data analysis chain to establish the risk of data breach, and further computation cannot increase the risk. Finally, the effect of quantum noise models, such as global and local depolarizing channels, on the maximal quantum leakage are investigated. As expected, quantum noise reduces information leakage. This is not surprising given previous observations on the effect of noise in quantum devices on data privacy~\cite{zhou2017differential,hirche2023quantum}. However, establishing such results are important in ensuring that the proposed notion of information leakage accords with intuition. 

Before moving on to the technical content of the paper, a few remarks must be stated regarding the relationship between the framework of this paper, its classical counterpart~\cite{issa2019operational},  and the relevant literature on accessible information~\cite{davies1978information,Holevo2019,vrehavcek2005iterative}. The main difference with the classical results in~\cite{issa2019operational} stems from the fact that the conditional probability of measurements given private classical data can be written explicitly in terms of quantum states (i.e., the quantum encoding of the classical data) and positive operator-valued measure (POVM) modelling the measurement process using the Born's rule. This implies a degree of freedom that is missing in the classical counterpart. That is, in~\cite{issa2019operational}, the conditional probabilities are fixed, but, in this paper, the adversary can changes the conditional probabilities by varying the POVM, i.e., the adversary can select the optimal measurement process for extracting as much information as possible. This implies an additional optimization over the POVMs, which can be potentially unbounded. To alleviate this difficulty, we use methods developed for accessible information~\cite{davies1978information,Holevo2019, vrehavcek2005iterative} to show that the number elements in the POVM is bounded and can be computed iteratively.

The remainder of the paper is organized as follows. We finish this section with some preliminary definitions and useful notations for quantum systems. Maximal quantum leakage is formally defined and a semi-explicit formula for its computation is presented in Section~\ref{sec:leakage}. Properties of the maximal quantum leakage, i.e., independence property, post-processing inequality, and upper and lower bounds for the leakage, are established in Section~\ref{sec:properties}. The effect of depolarizing noises inherent to quantum devices on maximal quantum leakage are investigated in Section~\ref{sec:noise}. To minimize interruptions to the flow of the paper and to focus on definitions and properties without getting bogged down in the mathematics, the proofs of all the results in Sections~\ref{sec:leakage}--\ref{sec:noise} are presented across different subsections in Appendix~\ref{sec:proofs}. Finally, Section~\ref{sec:conc} presents some concluding remarks and avenues for future research.

\subsection{Preliminary Definitions and Notations} 
The basic definitions and useful notations presented in this review section are mostly borrowed from~\cite{wilde2013quantum}. 

The state space of a quantum system is modelled by complex Hilbert space $\mathcal{H}$. Dirac's notation is used to denote pure quantum states. A \textit{pure quantum state} is an element of Hilbert space $\mathcal{H}$ with unit norm, e.g., $\ket{\psi}\in\mathcal{H}$ with $\|\ket{\psi}\|_2=\sqrt{\braket{\psi}{\psi}}=1$. The smallest non-trivial quantum system is a \textit{qubit} corresponding to a 2-dimensional Hilbert space. Combination of any two quantum states $\ket{\phi}$ and $\ket{\psi}$ is denoted by their tensor product $\ket{\phi}\otimes\ket{\psi}$. A \textit{mixed quantum state} is characterized by ensemble $\{(p_1,\ket{\psi_1}),\dots,(p_k,\ket{\psi_k})\}$, where $p_i\geq 0$ for all $i\in\{1,\dots,k\}$ and $\sum_{i} p_i=1$. The mixed state signifies that the quantum system is in pure state $\ket{\psi_i}$ with probability $p_i$. The density operator for the mixed quantum state $\{(p_1,\ket{\psi_1}),\dots,(p_k,\ket{\psi_k})\}$ is $\rho:=\sum_{i} p_i\ket{\psi_i}\bra{\psi_k}$. Thus, $\trace(\rho)=1$. Any pure quantum state $\ket{\phi}$ can be modelled using rank-one density operator $\rho=\ket{\phi}\bra{\phi}$. Therefore, without loss of generality, the density operator can denote the state of a quantum system. Combination of any two density operators $\rho$ and $\sigma$ is denoted by their tensor product $\rho\otimes\sigma$.

When the post-measurement state of the quantum system is of no interest (e.g., the quantum system is discarded after measurement), a measurement for a quantum system can be modelled using a positive operator-valued measure (POVM), which is a set of positive semi-definite matrices $F=\{F_i\}_{i}$ such that $\sum_{i} F_i=I$. In this case, the probability of obtaining output $i$ when taking a measurement on a system with quantum state $\rho$ is $\trace(\rho F_i)=\trace( F_i \rho)$. 


A quantum channel is a mapping from the space of density operators to potentially another space of density operators that is both completely positive and trace preserving. Following the Choi-Kraus theorem~\cite[Theorem~4.4.1]{wilde2013quantum}, for each quantum channel $\mathcal{E}$, there exists a family of linear operators $\{E_j\}_{j}$ satisfying $\sum_{j}E_j^\dag E_j=I$ such that $\mathcal{E}(\rho)=\sum_{j} E_j\rho E_j^\dag$ for all density operators $\rho$. For any matrix $A$, $A^\dag$ denotes its conjugate transpose or Hermitian. Tensor product of quantum channels $\mathcal{E}_1$ and $\mathcal{E}_2$ is defined as $\mathcal{E}_1\otimes\mathcal{E}_2(\rho_1\otimes\rho_2):=\mathcal{E}_1(\rho_1)\otimes\mathcal{E}_2(\rho_2)$ for all density operators $\rho_1$ and $\rho_2$. 

\section{Definition and Computation of Maximal Quantum Leakage} \label{sec:leakage}
The classical data that must be kept private or secure is modelled by discrete random variable $X$ with finite support set $\X$. Assume that $p_X(x):=\mathbb{P}\{X=x\}>0$ for all $x\in\mathbb{X}$. This assumption is without loss of generality as the set $\X$ can always be trimmed so that this assumption holds. Knowledge of the support set of secrete variable is referred to as \textit{domain knowledge} and is postulated to be required for developing privacy-preserving mechanisms~\cite{nuradha2023quantum}. As shown later, the maximal quantum leakage is not a function of $p_X$ is thus robust to the choice of the secret prior. This aligns with the requirement to keep the assumptions on the measure of information leakage minimal. 

For each realization of discrete random variable $X=x\in\X$, a quantum system $A$ in mixed state $\rho_A^x\in\mathcal{D}(\mathcal{H}_A)$ is prepared, i.e., ensemble $\mathcal{E}:=\{p_X(x),\rho_A^x\}_{x\in\X}$ is prepared. The quantum state $A$ is handed over to an adversary without revealing the realization of the classical random variable $X$. The expected density operator is then $\rho_A=\mathbb{E}\{\rho_A^X\}=\sum_{x\in\X}p_X(x)\rho_A^x$. This is the state of the quantum system from the perspective of someone who does not know the realization of $X$, i.e., the adversary. 

The objective of the adversary is estimate or guess a possibly randomized discrete function of the random variable $X$, denoted by random variable $Z$, by performing measurements on the quantum system $A$. The adversary performs positive operator-valued measure (POVM) $F=\{F_y\}_y$ on the quantum system $A$. Random variable $Y$ with finite support set $\Y=\{1,\dots,|F|\}$ denotes the outcome of the measurement. The probability of obtaining output $Y=y\in\Y$ when taking a measurement on quantum state $\rho_A^X$ is given by $\trace(\rho_A^X F_i)$. Therefore, 
\begin{align} \label{eqn:conditional}
    \mathbb{P}\{Y=y\,|\,X=x\}=\trace(\rho_A^x F_y),\quad \forall x\in\X,y\in\Y.
\end{align}
Upon observing the measurement outcome $Y$, the adversary takes a one-shot guess of the random variable $Z$ denoted by the random variable $\widehat{Z}$. The adversary then attempts to verify whether $\widehat{Z}$ is correct or not. 

\begin{definition}[Maximal Quantum Leakage] \label{def:qml} The maximal quantum leakage from random variable $X$ through quantum encoding of the data via ensemble $\{p_X(x),\rho_A^x\}_{x\in\X}$ is 
\begin{align}
    \mathcal{Q}(X\rightarrow A)_{\rho_A}:=\sup_{\{F_y\}_y}\sup_{Z,\widehat{Z}} \log_2 \left(\frac{\mathbb{P}\{Z=\widehat{Z}\}}{\displaystyle \max_{z\in\mathbb{Z}}\mathbb{P}\{Z=z\}} \right),
\end{align}
    where the inner supremum is taken over all random variables $Z$ and $\widehat{Z}$ with arbitrary support set $\mathbb{Z}$ and the outer supremum is taken over all POVMs $F=\{F_y\}_y$. 
\end{definition}

The maximal quantum leakage, as characterized in Definition~\ref{def:qml}, captures the multiplicative increase in the probability of correctly guessing any general random or deterministic function of the private data upon accessing the quantum encoding of the data. The probability of correctly guessing the realization of random variable $Z$ with access to measurement $Y$ is $\mathbb{P}\{Z=\widehat{Z}\}$. Without access to any measurements, the adversary's best guess of the realization of random variable $Z$ would have been the most likely or probable realization $\widetilde{Z}:=\argmax_{z\in\mathbb{Z}} \mathbb{P}\{Z=z\}$. Therefore, the probability of correctly guessing  the realization of random variable $Z$ without access to any measurements is $\mathbb{P}\{Z=\widetilde{Z}\}=\max_{z\in\mathbb{Z}}\mathbb{P}\{Z=z\}$. A large maximal quantum leakage implies that there exist features of the private data that can be guessed more reliably by accessing the quantum state. This demonstrates information leakage along those features. Noting that those features can be potentially exploited by the adversary, an analyst, who is  not aware of the target of the adversary, has to investigate and mitigate the weak point. 

\begin{theorem} \label{tho:mql} The maximal quantum leakage is given by
    \begin{align} \label{eqn:maxim_leakage_formula}
    \mathcal{Q}(X\rightarrow A)_{\rho_A}=\sup_{\{F_y\}_y} &\log_2\left(\sum_{y\in\mathbb{Y}} \max_{x\in\mathbb{X}} \trace(\rho_A^x F_y) \right).
 \end{align}
\end{theorem}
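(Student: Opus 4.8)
The plan is to carry out the two suprema in Definition~\ref{def:qml} in sequence, starting with the inner one. \textbf{The key observation} is that, once a POVM $F=\{F_y\}_y$ is fixed, Born's rule~\eqref{eqn:conditional} turns the quantum encoding into an ordinary classical channel $P_{Y|X}(y|x)=\trace(\rho_A^x F_y)$, so the inner double supremum over $(Z,\widehat{Z})$ is precisely the classical maximal leakage of~\cite{issa2019operational} for this induced channel. The whole problem therefore reduces to establishing, for every fixed $F$,
\begin{align*}
\sup_{Z,\widehat{Z}} \log_2\!\left(\frac{\mathbb{P}\{Z=\widehat{Z}\}}{\max_{z\in\mathbb{Z}}\mathbb{P}\{Z=z\}}\right)
=\log_2\!\left(\sum_{y\in\Y}\max_{x\in\X}\trace(\rho_A^x F_y)\right).
\end{align*}
Substituting this identity into Definition~\ref{def:qml} and retaining the outer supremum over $F$ then yields~\eqref{eqn:maxim_leakage_formula} directly, with no further manipulation of the logarithm.

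To prove the displayed identity I would first fix the probabilistic structure. Because $Z$ is a (possibly randomized) function of $X$ alone while $Y$ is generated from $X$ through the measurement, the triple forms a Markov chain $Z-X-Y$, whence $\mathbb{P}\{Y=y,Z=z\}=\sum_{x\in\X}p_X(x)P_{Z|X}(z|x)\trace(\rho_A^x F_y)$. Since the guess $\widehat{Z}$ is produced from $Y$ only, the optimal verifier is the MAP rule, giving $\max_{\widehat{Z}}\mathbb{P}\{Z=\widehat{Z}\}=\sum_{y\in\Y}\max_{z\in\mathbb{Z}}\mathbb{P}\{Y=y,Z=z\}$, while the prior guessing probability is $\max_{z}\mathbb{P}\{Z=z\}=\max_{z}\sum_{x}p_X(x)P_{Z|X}(z|x)$.

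For the \textbf{upper bound} I would bound each inner sum pointwise by factoring out the largest conditional, namely $\sum_{x}p_X(x)P_{Z|X}(z|x)\trace(\rho_A^x F_y)\le\big(\max_{x}\trace(\rho_A^x F_y)\big)\mathbb{P}\{Z=z\}$, then take $\max_z$, sum over $y$, and divide by $\max_z\mathbb{P}\{Z=z\}$. This produces $\sum_{y}\max_{x}\trace(\rho_A^x F_y)$ uniformly over all admissible $Z$, and hence an upper bound on the inner supremum.

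\textbf{The main obstacle is the achievability (lower) bound}, where the $p_X$-weighting must be made to cancel so that the bound is attained and the resulting expression is prior-independent. I would remove it by a mass-splitting construction: enlarge the target alphabet to $\mathbb{Z}=\{(x,k):x\in\X,\,1\le k\le n_x\}$ and take $Z$ uniform over the $n_x$ copies of $x$ given $X=x$, i.e.\ $P_{Z|X}((x,k)\,|\,x')=\tfrac{1}{n_x}\mathbf{1}[x'=x]$. Choosing $n_x\approx p_X(x)/\delta$ flattens the prior to $\mathbb{P}\{Z=(x,k)\}\approx\delta$ for every label, so the MAP term for each $y$ becomes $\max_{(x,k)}\tfrac{p_X(x)}{n_x}\trace(\rho_A^x F_y)=\delta\max_{x}\trace(\rho_A^x F_y)$; the factors of $\delta$ cancel in the ratio and reproduce the upper bound. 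The one technical point is that $n_x=p_X(x)/\delta$ need not be integer-valued, which I would handle by letting $\delta\to 0$ with rounding and passing to the limit. Alternatively, one may simply invoke~\cite[Theorem~1]{issa2019operational} for the induced channel $P_{Y|X}$ and skip the construction entirely. Taking the supremum over $F$ then completes the proof.
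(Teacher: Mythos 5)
Your proposal is correct and follows essentially the same route as the paper: fix the POVM, use Born's rule~\eqref{eqn:conditional} to reduce the inner supremum to the classical maximal leakage of the induced channel $P_{Y|X}(y|x)=\trace(\rho_A^x F_y)$, identify it with $\log_2\big(\sum_{y}\max_{x}\trace(\rho_A^x F_y)\big)$ via Theorem~1 of~\cite{issa2019operational}, and then restore the outer supremum over POVMs. The only difference is that the paper simply cites the classical result where you re-derive it (MAP decomposition, pointwise upper bound, and the alphabet-shattering achievability construction with $\delta\to 0$), which is a correct, self-contained reproduction of that cited proof.
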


\begin{proof}
See Appendix~\ref{proof:tho:mql}.
\end{proof}

Theorem~\ref{tho:mql} provides a semi-explicit formula for computing maximal quantum leakage. This is done by removing the first maximization over random variables $Z$ and $\widehat{Z}$. Interestingly, Theorem~\ref{tho:mql} shows that $\mathcal{Q}(X\rightarrow A)_{\rho_A}$ is independent of the prior for the secret $p_X$. Therefore, maximal quantum leakage is immune from or robust to incorporating a wrong assumption on the secret random variable $X$. The same cannot be said about accessible information, quantum mutual information, or Holevo information. In Theorem~\ref{tho:mql}, however, the outer maximization on $\{F_y\}$ still remains. This is a particularly troubling problem as the number of outcomes in POVM $\{F_y\}$ is not bounded, i.e., it can range to infinity. The next theorem shows that we can restrict our search to POVMs that have at most $\dim(\mathcal{H}_A)^2$ outcomes. Note that, because we cannot exchange the sum over $y$ and the maximization over $x$ in~\eqref{eqn:maxim_leakage_formula}, the information
leakage is nonzero in general.

\begin{theorem} \label{tho:maxsize}
    Let $\mathcal{H}_A$ have finite dimension $d_A$. The maximal quantum leakage is attained by POVM $F=\{F_y\}_{y=1}^m$ with $m\leq d_A^2$, such that $F_y$ are rank-one operators. 
\end{theorem}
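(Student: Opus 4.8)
The plan is to establish three facts in turn, working throughout with the inner objective $V(\{F_y\}):=\sum_{y}\max_{x}\trace(\rho_A^x F_y)$ (legitimate since $\log_2$ is increasing): that the supremum in \eqref{eqn:maxim_leakage_formula} is unchanged if restricted to rank-one POVMs, that it is then actually attained, and finally that at most $d_A^2$ outcomes are needed.

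First I would reduce to rank-one elements. Given any POVM $\{F_y\}$, spectrally decompose each element as $F_y=\sum_k G_{y,k}$ into rank-one positive operators, so that $\{G_{y,k}\}$ is again a POVM. Taking a separate maximizing $x$ for each summand dominates any common choice, hence
\begin{equation*}
\sum_{y,k}\max_x \trace(\rho_A^x G_{y,k}) \;\ge\; \sum_y \max_x \sum_k \trace(\rho_A^x G_{y,k}) \;=\; \sum_y \max_x \trace(\rho_A^x F_y),
\end{equation*}
so refining never decreases $V$. Therefore the supremum over rank-one POVMs equals the supremum over all POVMs, and I may assume rank-one elements from here on.

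Next, writing each rank-one element as $F_y=\lambda_y P_y$ with $\lambda_y\ge 0$ and $P_y$ a rank-one projector, the identity $\sum_y F_y=I$ forces $\sum_y\lambda_y=\trace(I)=d_A$, and the objective becomes $V=\sum_y\lambda_y g(P_y)$ with $g(P):=\max_x\trace(\rho_A^x P)$. The set $\mathcal{P}$ of rank-one projectors is compact and $g$ is continuous on it (a finite maximum of linear functions), so $K:=\{(P,g(P)):P\in\mathcal{P}\}$ is compact in $\mathrm{Herm}(\mathcal{H}_A)\times\mathbb{R}\cong\mathbb{R}^{d_A^2+1}$. Rescaling the weights by $d_A$ identifies $(I/d_A,V/d_A)$ with a point of the compact convex hull $\mathrm{conv}(K)$, and rank-one POVMs correspond exactly to such points whose first coordinate is $I/d_A$ (a nonempty slice, since $I/d_A\in\mathrm{conv}(\mathcal{P})$). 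Maximizing the last coordinate over this compact slice attains $V^\ast$, and by Carathéodory the maximizer is a finite convex combination, i.e.\ a rank-one POVM with finitely many outcomes. Converting an optimization over POVMs with an a priori unbounded number of outcomes into this linear maximization over a fixed compact convex set is the crux, and I expect it to be the main obstacle.

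Finally I would bound the number of outcomes by perturbing such an optimal finite rank-one POVM $\{F_y=\lambda_yP_y\}_{y=1}^m$, assuming all $\lambda_y>0$. If $m>d_A^2$, then $F_1,\dots,F_m$ are linearly dependent in the $d_A^2$-dimensional real space $\mathrm{Herm}(\mathcal{H}_A)$, so $\sum_y c_y F_y=0$ for some reals $c_y$ not all zero. The operators $F_y(t):=(1+tc_y)F_y$ preserve both the POVM identity $\sum_yF_y(t)=I$ and the rank-one structure whenever $1+tc_y\ge 0$, and since $M\mapsto\max_x\trace(\rho_A^x M)$ is positively homogeneous the value $V(t)=V^\ast+t\sum_y c_y\lambda_y g(P_y)$ is affine in $t$ with $t=0$ interior to the feasible interval. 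Optimality forces the slope to vanish, so $V(t)\equiv V^\ast$; increasing $|t|$ to the first endpoint at which some $1+tc_{y_0}=0$ (one exists because $\sum_y c_yF_y=0$ with $F_y\ge0$ and $F_y\neq0$ forces the $c_y$ to take both signs) annihilates $F_{y_0}$ while keeping the remaining operators a valid rank-one POVM of value $V^\ast$. Discarding $F_{y_0}$ lowers $m$ by one, and iterating drives $m$ down to at most $d_A^2$.
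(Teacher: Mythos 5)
Your proof is correct, and while it shares its first step with the paper, it completes the argument by a genuinely more self-contained route. The rank-one refinement via spectral decomposition and subadditivity of $\max_x$ is exactly the paper's Lemma~\ref{lemma:decompose}. After that, the two proofs diverge. The paper fixes the number of outcomes $k$, attains the maximum on the compact set $\mathcal{F}_k$, refines to rank-one elements, and then disposes of the outcome count by citation: convexity of $g$ (its Lemma~\ref{lemma:convex}) places the maximizer at an extreme point of $\mathcal{F}_m$, and Holevo's Theorem~2.21 plus Carath\'eodory's theorem characterize extreme points as linearly independent families with at most $d_A^2$ elements. You instead (i) handle attainment globally, by lifting rank-one projectors to the compact set $K=\{(P,g(P))\}\subset\mathrm{Herm}(\mathcal{H}_A)\times\mathbb{R}$ and identifying rank-one POVMs with the slice of $\mathrm{conv}(K)$ at first coordinate $I/d_A$ (this also dispenses with the paper's per-$k$ compactness argument and its somewhat delicate ``since $k$ was arbitrary'' conclusion), and (ii) prove the outcome bound by hand: linear dependence of more than $d_A^2$ elements in $\mathrm{Herm}(\mathcal{H}_A)$, positive homogeneity of $M\mapsto\max_x\trace(\rho_A^x M)$ making the value affine along the perturbation $F_y(t)=(1+tc_y)F_y$, the vanishing-slope argument at the interior optimum, and the push to an endpoint that annihilates one element. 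Your step (ii) is essentially an unpacking of the extreme-point fact the paper imports from Holevo, with the paper's convexity lemma absorbed into the affineness of $V(t)$. What your version buys is a proof readable without Davies' or Holevo's results (and, incidentally, your Carath\'eodory step already caps the outcome count at $d_A^2+2$ before the perturbation argument tightens it); what the paper's version buys is brevity and an explicit link to the accessible-information literature it is adapting.
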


\begin{proof}
    See Appendix~\ref{proof:tho:maxsize}.
\end{proof}

Maximal quantum leakage in Definition~\ref{def:qml} considers a scenario where the adversary only makes and verifies a single guess. This might not be entirely realistic in practice. An adversary might be able to make several guesses. For instance, the adversary might devise multiple privacy or security attacks based on various plausible guesses of the random variable $Z$. Therefore, we may assume that, upon observing the measurement outcome $Y$, the adversary makes $k$ guess of the random variable $Z$ denoted by the random variable $\widehat{Z}_1,\dots,\widehat{Z}_k$ and then attempts to verify them. In this scenario, we can modify maximal quantum leakage to compute $k$-maximal quantum leakage, defined below.

\begin{definition}[$k$-Maximal Quantum Leakage] \label{def:qml_k} The $k$- guess maximal quantum leakage from random variable $X$ through quantum encoding of the data via ensemble $\{p_X(x),\rho_A^x\}_{x\in\X}$ is 
\begin{align}
    \mathcal{Q}^{(k)}&(X\rightarrow A)_{\rho_A}\nonumber\\
    &:=\sup_{\{F_y\}_y}\sup_{Z,\widehat{Z}_1,\dots,\widehat{Z}_k} \log_2 \left(\frac{\mathbb{P}\{\exists j:Z=\widehat{Z}_j\}}{\displaystyle \max_{\mathcal{Z}\subset\mathbb{Z},|\mathcal{Z}|\leq k}\mathbb{P}\{Z\in\mathcal{Z}\}} \right),
\end{align}
    where the inner supremum is taken over all random variables $Z,\widehat{Z}_1,\dots,\widehat{Z}_k$ with arbitrary support set $\mathbb{Z}$ and the outer supremum is taken over all POVMs $F=\{F_y\}_y$. 
\end{definition}

\begin{theorem} \label{tho:k-leakage}
    $\mathcal{Q}^{(k)}(X\rightarrow A)_{\rho_A}=\mathcal{Q}(X\rightarrow A)_{\rho_A}$. 
\end{theorem}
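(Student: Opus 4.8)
The plan is to establish the two inequalities $\mathcal{Q}^{(k)}(X\rightarrow A)_{\rho_A}\leq \mathcal{Q}(X\rightarrow A)_{\rho_A}$ and $\mathcal{Q}^{(k)}(X\rightarrow A)_{\rho_A}\geq \mathcal{Q}(X\rightarrow A)_{\rho_A}$ separately, fixing a POVM $F=\{F_y\}_y$ throughout and only invoking the outer supremum at the very end. The structural facts I would use are that, by~\eqref{eqn:conditional}, $\mathbb{P}\{Y=y\,|\,X=x\}=\trace(\rho_A^x F_y)$, and that $Z$ is a (possibly randomized) function of $X$, so $Z$ and $Y$ are conditionally independent given $X$; consequently $\mathbb{P}\{Y=y,Z=z\}=\sum_{x\in\X}p_X(x)\,\mathbb{P}\{Z=z\,|\,X=x\}\,\trace(\rho_A^x F_y)$.

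For the upper bound, fix $Z$ and note that, given $Y=y$, the adversary's best $k$ guesses are the $k$ most probable values of $Z$, so the optimal numerator equals $\sum_{y}\max_{\mathcal{S}:|\mathcal{S}|\leq k}\sum_{z\in\mathcal{S}}\mathbb{P}\{Y=y,Z=z\}$, the inner maximum ranging over $k$-element subsets $\mathcal{S}$ of the support. I would then bound each joint term by $\mathbb{P}\{Y=y,Z=z\}\leq\big(\max_{x}\trace(\rho_A^x F_y)\big)\,\mathbb{P}\{Z=z\}$, pull the factor $\max_x\trace(\rho_A^x F_y)$ out of the sum over $\mathcal{S}$, and use $\sum_{z\in\mathcal{S}}\mathbb{P}\{Z=z\}\leq\max_{\mathcal{Z}:|\mathcal{Z}|\leq k}\mathbb{P}\{Z\in\mathcal{Z}\}$, which is exactly the denominator. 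After cancellation the ratio is bounded by $\sum_y\max_x\trace(\rho_A^x F_y)$ for every $Z$ and every $F$; taking logarithms and suprema and applying Theorem~\ref{tho:mql} gives $\mathcal{Q}^{(k)}\leq\mathcal{Q}$.

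For the lower bound, I would reuse the achievability construction behind Theorem~\ref{tho:mql}. Given any POVM $F$ and any single-guess target $Z_0$, define $Z:=(Z_0,V)$ with $V$ uniform on $\{1,\dots,k\}$ and independent of $(X,Y,Z_0)$, which keeps $Z$ a randomized function of $X$. Since $\mathbb{P}\{Z=(z_0,v)\}=\mathbb{P}\{Z_0=z_0\}/k$, every atom has probability at most $\max_{z_0}\mathbb{P}\{Z_0=z_0\}/k$ and there are at most $k$ of them, so $\max_{|\mathcal{Z}|\leq k}\mathbb{P}\{Z\in\mathcal{Z}\}=\max_{z_0}\mathbb{P}\{Z_0=z_0\}$, attained by the $k$ copies $\{(z_0^\star,v)\}_{v=1}^k$ of the most likely $z_0^\star$; thus the $k$-guess denominator collapses to the single-guess denominator. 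Taking the guesses $\widehat{Z}_j=(z_0^\star(Y),j)$, where $z_0^\star(y)$ maximizes $\mathbb{P}\{Z_0=z_0\,|\,Y=y\}$, success occurs precisely when $Z_0=z_0^\star(Y)$ irrespective of $V$, so the numerator equals the single-guess numerator $\sum_y\mathbb{P}\{Y=y\}\max_{z_0}\mathbb{P}\{Z_0=z_0\,|\,Y=y\}$. Hence the $k$-guess ratio for $(F,Z)$ equals the single-guess ratio for $(F,Z_0)$, and taking the supremum over $Z_0$ and $F$ yields $\mathcal{Q}^{(k)}\geq\mathcal{Q}$ via Theorem~\ref{tho:mql}.

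The hard part will be the lower bound, because it is \emph{not} a trivial consequence of single guessing being a special case: enlarging the guess budget also enlarges the baseline in the denominator, so a naive specialization loses a factor of up to $k$. The crux is to exhibit a target $Z$ for which the extra guessing power inflates numerator and denominator by exactly the same factor, which the independent uniform padding $V$ accomplishes; the one point needing care is verifying that $\{(z_0^\star,v)\}_{v=1}^k$ is genuinely an optimal prior $k$-subset, so that the denominator equals the single-guess value with no contribution from second-best values of $Z_0$.
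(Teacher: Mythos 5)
Your proof is correct, but it takes a genuinely different route from the paper. The paper's proof is a two-line reduction: fixing a POVM $\{F_y\}_y$ turns the quantum encoding into a classical channel $\mathbb{P}\{Y=y\,|\,X=x\}=\trace(\rho_A^x F_y)$ via Born's rule (this is the content of the proof of Theorem~\ref{tho:mql}), and then the classical multi-guess result, Theorem~4 of Issa et al., is invoked as a black box to conclude that the inner supremum is unchanged. You instead prove both inequalities from scratch: the converse direction via the Markov structure $Z-X-Y$ and the bound $\mathbb{P}\{Y=y,Z=z\}\leq\bigl(\max_x\trace(\rho_A^x F_y)\bigr)\mathbb{P}\{Z=z\}$, and the achievability direction via the padding construction $Z=(Z_0,V)$ with $V$ uniform on $\{1,\dots,k\}$ and independent of everything else, which inflates the optimal numerator and the prior denominator by exactly the same factor. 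Both steps check out: in the converse, pulling $\max_x\trace(\rho_A^x F_y)$ out of the sum over the guess set and recognizing $\max_{|\mathcal{Z}|\leq k}\mathbb{P}\{Z\in\mathcal{Z}\}$ as the denominator is valid; in the achievability, your verification that $\{(z_0^\star,v)\}_{v=1}^k$ is an optimal prior $k$-subset (every atom of $Z$ has mass at most $\mathbb{P}\{Z_0=z_0^\star\}/k$, so no $k$-subset can beat $\mathbb{P}\{Z_0=z_0^\star\}$) is exactly the point that needed care, and your diagnosis that the lower bound is not a naive specialization—because enlarging the guess budget also enlarges the baseline—is the right one. What the paper's approach buys is economy: any classical identity for maximal leakage transfers immediately once the Born-rule reduction is made, so the result is free given the citation. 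What your approach buys is self-containedness and transparency: the reader sees the mechanism (the conditional-independence bound and the uniform padding) without consulting the external reference, and your argument would also adapt directly to variants of the definition where the cited theorem does not apply verbatim.
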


\begin{proof}
    The proof follows from~\cite[Theorem~4]{issa2019operational} and the proof of Theorem~\ref{tho:mql}.
\end{proof}

Theorem~\ref{tho:k-leakage} implies that number of guesses that the adversary can make is immaterial in measuring information leakage. Therefore, the choice of one-shot guesses in maximal quantum leakage is without loss of generality. 

\section{Properties of Maximal Quantum Leakage} \label{sec:properties}
In this section, properties of maximal quantum leakage are established. Maximal quantum leakage satisfies independence property (i.e., leakage is zero if the quantum state is independent of the classical data) and post-processing inequality (i.e., applying a quantum channel reduces information leakage). We can also bound maximal quantum leakage based on the dimension of the quantum system and the cardinality of the support set of the secret random variable. We start with independence property. To do so, we must define indistinguishably to establish when the quantum state is independent of the classical data. 

\begin{definition}[Indistinguishability] $(\rho_A^x)_{x\in\mathbb{X}}$ is indistinguishable if $\rho_A^x=\rho_A^{x'}$ for all $x,x'\in\X$. 
\end{definition}

Indistinguishability implies that, for various realizations of the classical data $X$, the quantum state remains the same. Therefore, an adversary cannot obtain any measurements from the quantum states that correlate with the classical data. Therefore, there is no leakage of classical data. This is established in the next result.

\begin{proposition} \label{prop:independence}
    $\mathcal{Q}(X\rightarrow A)_{\rho_A}\geq 0$ with equality if and only if $(\rho_A^x)_{x\in\mathbb{X}}$ is indistinguishable. 
\end{proposition}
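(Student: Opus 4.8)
The plan is to work directly from the semi-explicit formula in Theorem~\ref{tho:mql}, namely $\mathcal{Q}(X\rightarrow A)_{\rho_A}=\sup_{\{F_y\}_y} \log_2(\sum_{y} \max_{x} \trace(\rho_A^x F_y))$, and to treat the three assertions (nonnegativity, the ``if'' direction, and the ``only if'' direction) separately.

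For nonnegativity I would fix any single $x_0\in\X$ and use the termwise bound $\max_x \trace(\rho_A^x F_y)\geq \trace(\rho_A^{x_0} F_y)$. Summing over $y$ and invoking $\sum_y F_y=I$ together with $\trace(\rho_A^{x_0})=1$ gives $\sum_y \max_x \trace(\rho_A^x F_y)\geq 1$ for every POVM, so the argument of $\log_2$ never drops below $1$ and hence $\mathcal{Q}\geq 0$. The same computation, read in the opposite direction, settles the ``if'' part: when $(\rho_A^x)_{x\in\X}$ is indistinguishable the common state $\rho$ yields $\max_x \trace(\rho_A^x F_y)=\trace(\rho F_y)$, so the sum collapses to $\trace(\rho \sum_y F_y)=\trace(\rho)=1$ for every POVM; the supremum is thus exactly $1$ and $\mathcal{Q}=0$.

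The substantive direction is the converse, which I would prove by contraposition: assuming the states are distinguishable, I exhibit a single POVM making the sum strictly exceed $1$. I would pick $x_1,x_2$ with $\rho_A^{x_1}\neq\rho_A^{x_2}$ and set $\Delta:=\rho_A^{x_1}-\rho_A^{x_2}$. This operator is Hermitian, nonzero, and traceless, so it must possess at least one strictly positive eigenvalue; letting $P$ be the orthogonal projection onto the span of its positive eigenvectors gives $\trace(\Delta P)>0$. Taking the two-outcome POVM $F_1=P$, $F_2=I-P$ and lower-bounding the first term by $\trace(\rho_A^{x_1} P)$ and the second by $\trace(\rho_A^{x_2}(I-P))$ yields $\sum_y \max_x \trace(\rho_A^x F_y)\geq 1+\trace(\Delta P)>1$, whence $\mathcal{Q}>0$.

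I expect the main obstacle to be this construction in the converse: one must translate the mere inequality $\rho_A^{x_1}\neq\rho_A^{x_2}$ into a concrete measurement that witnesses leakage, and the natural choice is the positive-eigenspace projector of the difference operator. The only point requiring care is verifying $\trace(\Delta P)>0$, which holds because a nonzero traceless Hermitian matrix necessarily carries strictly positive spectral weight on its positive eigenspace; everything else reduces to termwise comparison and the normalization $\sum_y F_y=I$.
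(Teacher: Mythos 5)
Your proof is correct, but it takes a genuinely different route from the paper's. The paper leans on the classical maximal-leakage machinery of Issa et al.: it invokes their Lemma~1 both for nonnegativity and for the equivalence ``$I_\infty(X;Y)=0$ iff $X$ and $Y$ are independent,'' and then handles the only-if direction by observing that zero leakage under \emph{every} POVM forces $\trace\left((\rho_A^x-\rho_A^{x'})F_y\right)=0$ for all $0\preceq F_y\preceq I$, which in turn forces $\rho_A^x=\rho_A^{x'}$. You avoid the classical results entirely: your nonnegativity and ``if'' arguments are elementary termwise bounds using $\sum_y F_y=I$ and $\trace(\rho)=1$, and your converse is a contrapositive with an explicit witness --- the two-outcome POVM $\{P,\,I-P\}$ built from the positive-eigenspace projector of $\Delta=\rho_A^{x_1}-\rho_A^{x_2}$, which is essentially the Helstrom distinguishing measurement. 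What your approach buys is self-containedness and a quantitative strengthening: since $\trace(\Delta P)=\tfrac{1}{2}\|\Delta\|_1$ for traceless Hermitian $\Delta$, your construction actually yields the explicit lower bound $\mathcal{Q}(X\rightarrow A)_{\rho_A}\geq \log_2\left(1+\tfrac{1}{2}\|\rho_A^{x_1}-\rho_A^{x_2}\|_1\right)$, whereas the paper's final step (``equal traces against all $0\preceq F\preceq I$ implies equal states'') is non-constructive --- your projector is precisely the operator that makes that step concrete. What the paper's approach buys is brevity and a visible link between the quantum notion and its classical ancestor, which fits its broader narrative of importing properties from classical maximal leakage.
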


\begin{proof}
    See Appendix~\ref{proof:prop:independence}.
\end{proof}

In the next proposition, we provide an upper bound for maximal quantum leakage based on the dimension of the quantum system $d_A=\dim(\mathcal{H}_A)$ and the cardinality of the support set of the secret random variable $|\X|$. A discrete random variable $\X$ with support set of size $|\X|$  has no more than $\log_2(|\X|)$ bits of information to be leaked. Therefore, $\mathcal{Q}(X\rightarrow A)_{\rho_A}$ can never be larger than $\log_2(|\X|)$. 

\begin{proposition} \label{property:2}
    Let $\mathcal{H}_A$ have finite dimension $d_A$. Then, $\mathcal{Q}(X\rightarrow A)_{\rho_A}\leq \min\{\log_2(|\X|),\log_2(d_A^2)\}$.
\end{proposition}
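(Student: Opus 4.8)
The plan is to bound the two arguments of the minimum separately, starting in both cases from the semi-explicit formula of Theorem~\ref{tho:mql}. The key observation for the $\log_2(|\X|)$ bound is that, because $\rho_A^x$ and $F_y$ are both positive semi-definite, each trace $\trace(\rho_A^x F_y)$ is nonnegative, so the maximum over $x$ is dominated by the sum over $x$:
\begin{align*}
\sum_{y\in\Y}\max_{x\in\X}\trace(\rho_A^x F_y) \leq \sum_{y\in\Y}\sum_{x\in\X}\trace(\rho_A^x F_y).
\end{align*}
First I would interchange the order of summation and invoke the POVM completeness relation $\sum_{y}F_y=I$ together with $\trace(\rho_A^x)=1$ to collapse the right-hand side to $\sum_{x\in\X}\trace(\rho_A^x)=|\X|$. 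Since this estimate holds for every POVM, taking the outer supremum and then $\log_2$ yields $\mathcal{Q}(X\rightarrow A)_{\rho_A}\leq\log_2(|\X|)$. This half is entirely self-contained and uses no structural information about the optimizing POVM.

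For the $\log_2(d_A^2)$ bound, I would instead invoke Theorem~\ref{tho:maxsize} to restrict the outer supremum to POVMs with at most $m\leq d_A^2$ outcomes. Each term obeys $\trace(\rho_A^x F_y)\leq\trace(\rho_A^x)=1$, since $0\preceq F_y\preceq I$ (the upper bound holding because $I-F_y=\sum_{y'\neq y}F_{y'}$ is positive semi-definite). Hence $\max_{x\in\X}\trace(\rho_A^x F_y)\leq 1$ for every $y$, and summing over the at most $d_A^2$ outcomes bounds the argument of the logarithm by $d_A^2$. Applying $\log_2$ gives $\mathcal{Q}(X\rightarrow A)_{\rho_A}\leq\log_2(d_A^2)$, and combining the two inequalities yields the claimed minimum.

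The only nontrivial ingredient — and hence the main obstacle if Theorem~\ref{tho:maxsize} were unavailable — is the $\log_2(d_A^2)$ half, which rests entirely on the cardinality bound for the optimizing POVM. Without that result the outer supremum ranges over POVMs with arbitrarily many outcomes, so the crude count $\sum_{y}1$ diverges and the per-term estimate $\trace(\rho_A^x F_y)\leq 1$ is useless on its own; it becomes effective only once the number of outcomes is controlled. The $\log_2(|\X|)$ half, by contrast, requires nothing beyond nonnegativity of the traces and POVM completeness, so no delicate argument is needed there.
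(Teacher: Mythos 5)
Your proof is correct and follows essentially the same route as the paper: the paper's proof also splits the bound this way, invoking Theorem~\ref{tho:maxsize} to restrict to a maximizing POVM with $|\Y|=m\leq d_A^2$ outcomes and then citing Lemma~1 of~\cite{issa2019operational} for $I_\infty(X;Y)\leq\min\{|\X|,|\Y|\}$, which is precisely the pair of estimates you establish. The only difference is that you unpack that cited classical lemma into explicit trace computations (max-to-sum plus POVM completeness $\sum_y F_y=I$ for the $\log_2(|\X|)$ half, and the per-term bound $\trace(\rho_A^x F_y)\leq 1$ for the $\log_2(d_A^2)$ half), which makes your argument self-contained but not substantively different.
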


\begin{proof}
    See Appendix~\ref{proof:property:2}.
\end{proof}

Another important property is the data-processing inequality stating that quantum maximal leakage can be only reduced by application of an arbitrary quantum channel. This implies that we only need to compute the information leakage at the beginning of the data analysis chain to establish the risk of data breach, and further computation cannot increase the risk. The data-processing inequality is proved in the next proposition.

\begin{proposition} \label{prop:dataprocessing}
    For any quantum channel $\mathcal{E}$, $\mathcal{Q}(X\rightarrow A)_{\mathcal{E}(\rho_A)}\leq \mathcal{Q}(X\rightarrow A)_{\rho_A}$. 
\end{proposition}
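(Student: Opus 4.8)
The plan is to work directly with the semi-explicit formula of Theorem~\ref{tho:mql} and to push the measurement operators through the adjoint (Heisenberg-picture) of the channel $\mathcal{E}$, so that a measurement performed \emph{after} the channel becomes an equivalent measurement performed \emph{before} it. Concretely, I would fix a Choi--Kraus decomposition $\mathcal{E}(\rho)=\sum_j E_j\rho E_j^\dag$ with $\sum_j E_j^\dag E_j=I$, which exists by the Choi--Kraus theorem quoted in the preliminaries. For an arbitrary POVM $\{F_y\}_y$ on the post-channel system, define
\begin{align}
G_y:=\sum_j E_j^\dag F_y E_j.
\end{align}
Cyclicity of the trace then gives $\trace(\mathcal{E}(\rho_A^x)F_y)=\trace(\rho_A^x G_y)$ for every $x\in\X$ and every $y$.

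Next I would verify that $\{G_y\}_y$ is itself a valid POVM on the pre-channel system. Positivity is immediate, since $F_y\succeq 0$ implies $E_j^\dag F_y E_j\succeq 0$ and a sum of positive semidefinite operators is positive semidefinite. For the normalization, trace preservation of $\mathcal{E}$ (equivalently, unitality of its adjoint) yields
\begin{align}
\sum_y G_y=\sum_j E_j^\dag\Big(\sum_y F_y\Big)E_j=\sum_j E_j^\dag E_j=I.
\end{align}

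Finally, substituting the trace identity into the formula of Theorem~\ref{tho:mql} shows that every value of the objective for $\mathcal{E}(\rho_A)$ obtained from a POVM $\{F_y\}_y$ equals the value for $\rho_A$ obtained from the induced POVM $\{G_y\}_y$, namely
\begin{align}
\sum_{y}\max_{x\in\X}\trace(\mathcal{E}(\rho_A^x)F_y)=\sum_{y}\max_{x\in\X}\trace(\rho_A^x G_y).
\end{align}
Since $\{G_y\}_y$ ranges over a subset of all POVMs as $\{F_y\}_y$ varies, the supremum defining $\mathcal{Q}(X\rightarrow A)_{\mathcal{E}(\rho_A)}$ is bounded above by the supremum defining $\mathcal{Q}(X\rightarrow A)_{\rho_A}$, and monotonicity of $\log_2$ delivers the claim. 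I expect the only delicate point to be confirming that $\{G_y\}_y$ is a genuine POVM, and in particular that its normalization relies precisely on trace preservation of the channel; once that is settled, the inequality is essentially a change-of-variables argument in the outer supremum.
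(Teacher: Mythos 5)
Your proposal is correct and follows essentially the same route as the paper's own proof: both pass the POVM through the adjoint of the channel via a Choi--Kraus decomposition (your $G_y=\sum_j E_j^\dag F_y E_j$ is exactly the paper's $\bar{F}_y$), verify that the induced family is a POVM using positivity and $\sum_j E_j^\dag E_j=I$, and conclude by noting the supremum for $\mathcal{E}(\rho_A)$ runs over a subset of the POVMs available for $\rho_A$. No gaps to report.
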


\begin{proof}
    See Appendix~\ref{proof:prop:dataprocessing}.
\end{proof}

An important notion of information in quantum information theory is accessible information~\cite[p.\,298]{wilde2013quantum}. For ensemble $\mathcal{E}=\{p_X(x),\rho_A^x\}_{x\in\X}$, defined in Section~\ref{sec:leakage}, the accessible information is 
\begin{align*}
    I_{acc}(\mathcal{E}):=\sup_{\{F_y\}_y} I(X;Y),
\end{align*}
where $I(X;Y)$ is the classical mutual information between the random variable $X$ denoting the secret or private information and the random variable $Y$ denoting the measurement outcome. The next proposition provides a relationship between accessible information and maximal quantum leakage. 

\begin{proposition} \label{prop:relation_accessible_info}
    $I_{acc}(\mathcal{E})\leq \mathcal{Q}(X\rightarrow A)_{\rho_A}$. 
\end{proposition}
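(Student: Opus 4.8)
The plan is to reduce the inequality, for each fixed POVM, to a purely classical statement about the induced channel, and then lift back to the quantum setting by taking suprema. Fix a POVM $F=\{F_y\}_y$. By the Born rule in~\eqref{eqn:conditional}, this POVM induces a classical channel from $X$ to $Y$ with transition probabilities $p_{Y|X}(y|x)=\trace(\rho_A^x F_y)$. The first observation is that the quantity inside the logarithm in Theorem~\ref{tho:mql}, namely $\sum_{y}\max_{x}\trace(\rho_A^x F_y)=\sum_y\max_x p_{Y|X}(y|x)$, is exactly the argument of the \emph{classical} maximal leakage $\mathcal{L}(X\to Y)$ of this induced channel in the sense of~\cite{issa2019operational}. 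Thus, for a fixed $F$, both $I(X;Y)$ and the expression in Theorem~\ref{tho:mql} collapse onto their classical counterparts for the same channel.

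The key step is the classical domination inequality $I(X;Y)\leq \log_2\big(\sum_y\max_x p_{Y|X}(y|x)\big)$, valid for the channel induced by any fixed POVM. I would prove this directly. Writing $I(X;Y)=\sum_{x,y}p_X(x)p_{Y|X}(y|x)\log_2\big(p_{Y|X}(y|x)/p_Y(y)\big)$ with $p_Y(y)=\sum_x p_X(x)p_{Y|X}(y|x)$, I first bound the numerator inside the logarithm by $\max_{x'}p_{Y|X}(y|x')$, which yields $I(X;Y)\leq\sum_y p_Y(y)\log_2\big(\max_{x'}p_{Y|X}(y|x')/p_Y(y)\big)$ after carrying out the sum over $x$. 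Applying Jensen's inequality to the concave logarithm then gives $\sum_y p_Y(y)\log_2\big(\max_{x'}p_{Y|X}(y|x')/p_Y(y)\big)\leq\log_2\sum_y\max_{x'}p_{Y|X}(y|x')$, which is the desired bound. (Since the paper assumes $p_X(x)>0$ for all $x\in\X$, the maximization over $x$ runs over the full support, matching the classical definition.) Equivalently, one may invoke~\cite{issa2019operational} directly: classical maximal leakage equals the Sibson mutual information of infinite order $I_\infty^S(X;Y)$, and the Sibson mutual information is non-decreasing in its order, so $I(X;Y)=I_1^S(X;Y)\leq I_\infty^S(X;Y)$.

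Finally, I would take suprema over all POVMs. For every $F$, the two steps above give $I(X;Y)\leq\log_2\big(\sum_y\max_x\trace(\rho_A^x F_y)\big)\leq\mathcal{Q}(X\to A)_{\rho_A}$, where the last inequality holds because $\mathcal{Q}(X\to A)_{\rho_A}$ is, by Theorem~\ref{tho:mql}, the supremum of the middle expression over all POVMs. Since the right-hand side is independent of $F$, taking the supremum of the left-hand side over all POVMs yields $I_{acc}(\mathcal{E})=\sup_F I(X;Y)\leq\mathcal{Q}(X\to A)_{\rho_A}$. The main obstacle is the classical domination inequality of the second step; the reduction and the final lifting are essentially bookkeeping, relying only on the fact that fixing a POVM turns both information measures into functionals of the single induced classical channel.
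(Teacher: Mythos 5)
Your proposal is correct and follows essentially the same route as the paper's proof: for each fixed POVM the Born rule induces a classical channel, the classical domination inequality $I(X;Y)\leq \log_2\left(\sum_{y}\max_{x}\trace(\rho_A^x F_y)\right)$ (mutual information bounded by classical maximal leakage, i.e., Sibson order $1$ versus order $\infty$) is applied, and suprema over POVMs are then taken on both sides. The only difference is that the paper simply cites Lemma~2 of~\cite{issa2019operational} for that classical step, whereas you additionally prove it from scratch via the pointwise bound and Jensen's inequality, making the argument self-contained; both are valid.
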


\begin{proof}
    See Appendix~\ref{proof:prop:relation_accessible_info}.
\end{proof}

The inequality in Proposition~\ref{prop:relation_accessible_info} is rather intuitive. The accessible information, in the context of security analysis, deals with an adversary that seeks to estimate the entire secret data. However, in defining maximal quantum leakage, we let the adversary extract as much information as possible by estimating any general possibly randomized function of the data. The adversary in the maximal quantum leakage setting is stronger and more general in comparison with the adversary in the context of the accessible information. This inequality is a direct consequence of requiring minimal assumptions on the adversary.

\section{Effect of Quantum Noise on Maximal Leakage} \label{sec:noise}

A common noise model in quantum systems is the (global) depolarizing channel defined as
\begin{align} \label{eqn:dep_channel}
    \mathcal{D}_{p,d_A}(\rho):=\frac{p}{d_A}I+(1-p)\rho,
\end{align}
where $d_A$ is the dimension of the Hilbert space $\mathcal{H}_A$ to which the system belongs and $p\in[0,1]$ is a probability parameter. In the next proposition, it is shown that the depolarizing channel results in reduction of the maximal quantum leakage. This means that maximal quantum leakage accords with intuition and similar results on privacy-preserving quantum systems~\cite{zhou2017differential,hirche2023quantum,aaronson2019gentle,nuradha2023quantum,farokhi2023privacy}. 

\begin{proposition} \label{prop:global}
For global depolarizing channel $\mathcal{D}_{p,d_A}$,
    $$\mathcal{Q}(X\rightarrow A)_{\mathcal{D}_{p,d_A}(\rho_A)}=\log_2( p +(1-p)2^{\mathcal{Q}(X\rightarrow A)_{\rho_A}}).$$
Particularly,
    \begin{align*}
        \frac{\mathrm{d}}{\mathrm{d}p}\mathcal{Q}(X\rightarrow A)_{\mathcal{D}_{p,d_A}(\rho_A)}<0 \mbox{ if } \mathcal{Q}(X\rightarrow A)_\rho>0.
    \end{align*}
Therefore, $\mathcal{Q}(X\rightarrow A)_{\mathcal{D}_{p,d_A}(\rho_A)}$ is a decreasing function of the probability parameter $p$.
\end{proposition}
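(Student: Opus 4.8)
The plan is to exploit the semi-explicit formula from Theorem~\ref{tho:mql}, namely $\mathcal{Q}(X\rightarrow A)_{\rho_A}=\sup_{\{F_y\}}\log_2(\sum_y \max_x \trace(\rho_A^x F_y))$, and to track how each term $\trace(\rho_A^x F_y)$ transforms under the depolarizing channel. First I would compute $\trace(\mathcal{D}_{p,d_A}(\rho_A^x)F_y)$ directly from the definition~\eqref{eqn:dep_channel}. By linearity of the trace,
\begin{align*}
    \trace(\mathcal{D}_{p,d_A}(\rho_A^x)F_y)
    &=\frac{p}{d_A}\trace(F_y)+(1-p)\trace(\rho_A^x F_y).
\end{align*}
The crucial observation is that the first term $\tfrac{p}{d_A}\trace(F_y)$ does \emph{not} depend on $x$. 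This is exactly what makes the maximization over $x$ behave nicely.

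The key step is then to push the maximization over $x$ through this affine expression. Since the $x$-independent term is a constant additive offset and the coefficient $(1-p)$ of the $x$-dependent term is nonnegative, I would use the identity $\max_x (c + a\, g_x) = c + a\max_x g_x$ valid for $a\geq 0$, giving
\begin{align*}
    \max_{x\in\X}\trace(\mathcal{D}_{p,d_A}(\rho_A^x)F_y)
    =\frac{p}{d_A}\trace(F_y)+(1-p)\max_{x\in\X}\trace(\rho_A^x F_y).
\end{align*}
Summing over $y\in\Y$ and using the POVM constraint $\sum_y F_y=I$, the first contribution collapses: $\sum_y \tfrac{p}{d_A}\trace(F_y)=\tfrac{p}{d_A}\trace(\sum_y F_y)=\tfrac{p}{d_A}\trace(I)=\tfrac{p}{d_A}\cdot d_A=p$. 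Hence, for \emph{every} fixed POVM,
\begin{align*}
    \sum_{y\in\Y}\max_{x\in\X}\trace(\mathcal{D}_{p,d_A}(\rho_A^x)F_y)
    = p+(1-p)\sum_{y\in\Y}\max_{x\in\X}\trace(\rho_A^x F_y).
\end{align*}

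Since $u\mapsto \log_2(p+(1-p)2^{u})$ is monotonically increasing in $u$ (as $1-p\geq 0$), taking the supremum over $\{F_y\}$ on both sides preserves the optimizer, so the supremum of the left side equals $\log_2(p+(1-p)2^{v})$ where $v=\sup_{\{F_y\}}\log_2(\sum_y\max_x\trace(\rho_A^x F_y))=\mathcal{Q}(X\rightarrow A)_{\rho_A}$. This yields the stated formula $\mathcal{Q}(X\rightarrow A)_{\mathcal{D}_{p,d_A}(\rho_A)}=\log_2(p+(1-p)2^{\mathcal{Q}(X\rightarrow A)_{\rho_A}})$. For the monotonicity claim, I would differentiate: writing $L=2^{\mathcal{Q}(X\rightarrow A)_{\rho_A}}$ (a constant in $p$), the derivative of $\log_2(p+(1-p)L)$ with respect to $p$ has the sign of $1-L$, which is strictly negative precisely when $L>1$, i.e.\ when $\mathcal{Q}(X\rightarrow A)_{\rho_A}>0$. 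The main subtlety to handle carefully is the interchange of the supremum with the increasing transformation $u\mapsto\log_2(p+(1-p)2^u)$: I would justify this by noting the transformation is applied pointwise and monotonically, so it commutes with $\sup$, and the supremum is attained by Theorem~\ref{tho:maxsize}, which guarantees the optimizing POVM lives in a compact set of rank-one operators with at most $d_A^2$ outcomes. This compactness is what rules out any pathology from the unbounded-in-principle number of POVM elements.
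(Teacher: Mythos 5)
Your proposal is correct and follows essentially the same route as the paper's own proof: expand $\trace(\mathcal{D}_{p,d_A}(\rho_A^x)F_y)$ by linearity, pull the $x$-independent term out of the maximization, collapse $\sum_y \trace(F_y)$ to $\trace(I)/d_A \cdot p = p$ via the POVM constraint, and then commute the supremum over POVMs with the affine (hence monotone) transformation. Your added remarks on attainment via Theorem~\ref{tho:maxsize} and the explicit derivative computation are fine but not needed beyond what the paper does, since $\sup_F\,[p+(1-p)S(F)] = p+(1-p)\sup_F S(F)$ holds for $1-p\geq 0$ irrespective of attainment.
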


\begin{proof} 
See Appendix~\ref{proof:prop:global}.
\end{proof}

\begin{figure}
    \centering
    \begin{tikzpicture}
    \node[] at (0,0) {\includegraphics[trim=1.3cm .1cm 1.2cm .5cm,clip,width=.9\columnwidth]{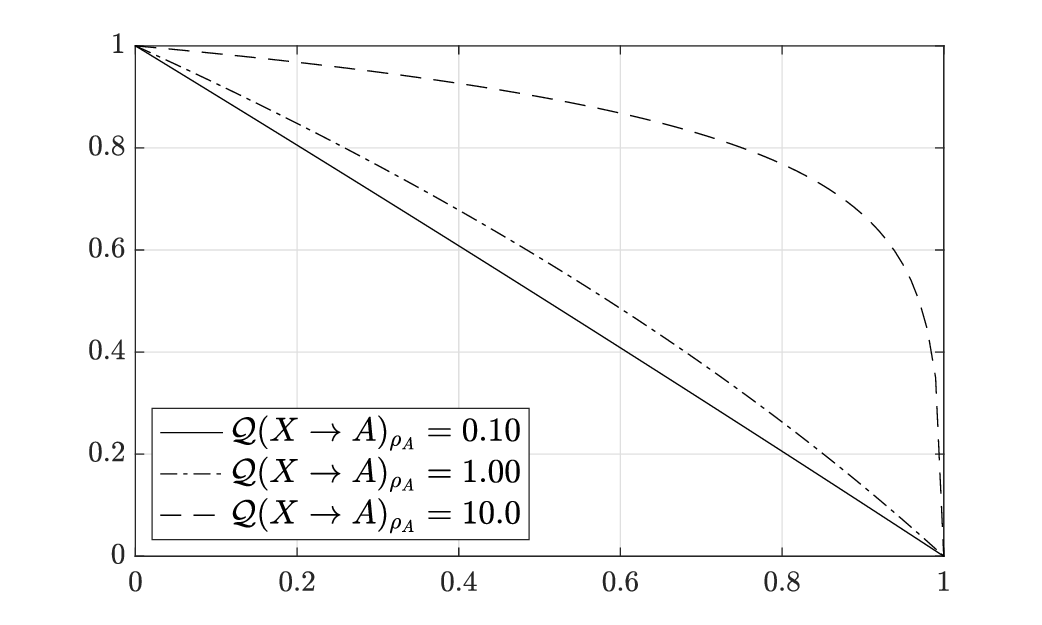}};
    \node[] at (0,-2.8) {$p$};
    \node[rotate=90] at (-4.2,.2) {$\mathcal{Q}(X\rightarrow A)_{\mathcal{D}_{p,d_A}(\rho_A)}/\mathcal{Q}(X\rightarrow A)_{\rho_A}$};
\end{tikzpicture}
\vspace{-3mm}
    \caption{Ratio of information leakage without and with global depolarizing channel versus the probability parameter $p$. As expected, the noisier the channel is, i.e., the higher the probability parameter is, the smaller maximal quantum leakage is. }
    \label{fig:global}
\end{figure}

Fig.~\ref{fig:global} illustrates the ratio of information leakage without and with global depolarizing channel versus the probability parameter. When the probability parameter rises, and therefore the global depolarizing channel becomes noisier, the maximal quantum leakage drops continuously. 

The previous proposition demonstrated how maximal quantum leakage is affected by global depolarizing noise. However, in quantum computing devices, each qubit can be affected by local noise. Consider the case where the Hilbert space $\mathcal{H}_A$ is composed of $k$ qubits, i.e., $d_A=2^k$. In this case, we consider local depolarizing noise channel $\mathcal{D}_{p,2}^{\otimes k}:=\mathcal{D}_{p,2}\otimes\cdots\otimes \mathcal{D}_{p,2}$, where a depolarising channel $\mathcal{D}_{p,2}$ acts on each qubit separately. The effect of local depolarizing channel on the maximal quantum leakage is investigated in the next proposition.

\begin{proposition} \label{prop:local}
For local depolarizing channel $\mathcal{D}_{p,2}^{\otimes k}$,
    $$\mathcal{Q}(X\rightarrow A)_{\mathcal{D}_{p,2}^{\otimes k}(\rho_A)}\leq \log_2( p^k +(1-p^k)2^{\mathcal{Q}(X\rightarrow A)_{\rho_A}}).$$
\end{proposition}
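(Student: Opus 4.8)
The plan is to combine the convex-decomposition structure of the local depolarizing channel with the post-processing inequality (Proposition~\ref{prop:dataprocessing}) and the semi-explicit formula of Theorem~\ref{tho:mql}; crucially, I expect to obtain only an inequality here (rather than the equality of the global case in Proposition~\ref{prop:global}) because a single estimate of the form ``$\max$ of a nonnegatively-weighted sum $\leq$ sum of the maxes'' will be used, and this step is not tight.

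First I would rewrite each single-qubit factor as a convex combination $\mathcal{D}_{p,2}=(1-p)\,\mathrm{id}+p\,\mathcal{R}$, where $\mathcal{R}(\sigma):=\trace(\sigma)\tfrac{I}{2}$ replaces a qubit by the maximally mixed state. Expanding the $k$-fold tensor product yields
\begin{align*}
    \mathcal{D}_{p,2}^{\otimes k}(\rho)=\sum_{S\subseteq\{1,\dots,k\}}(1-p)^{|S|}p^{k-|S|}\,\Phi_S(\rho),
\end{align*}
where $\Phi_S$ is the map that keeps the qubits indexed by $S$ (applies the identity) and replaces the remaining $k-|S|$ qubits by the maximally mixed state, i.e.\ $\Phi_S(\rho)=\trace_{S^c}(\rho)\otimes I_{S^c}/2^{k-|S|}$. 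Each $\Phi_S$ is a composition of a partial trace and a preparation, hence a legitimate quantum channel, and the scalar weights are nonnegative and sum to $(1-p+p)^k=1$. The two extreme terms are $\Phi_{\{1,\dots,k\}}=\mathrm{id}$ (weight $(1-p)^k$) and $\Phi_{\emptyset}(\rho)=I/2^k$ (weight $p^k$), the latter being independent of its input.

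Next I would substitute these states into the formula of Theorem~\ref{tho:mql}, namely $2^{\mathcal{Q}(X\rightarrow A)_{\sigma_A}}=\sup_{\{F_y\}_y}\sum_{y}\max_{x}\trace(\sigma_A^x F_y)$. Fixing an arbitrary POVM $\{F_y\}_y$ and inserting $\sigma_A^x=\mathcal{D}_{p,2}^{\otimes k}(\rho_A^x)$, I would split off the $S=\emptyset$ term. Since $\trace(\Phi_\emptyset(\rho_A^x)F_y)=\trace(F_y)/2^k$ does not depend on $x$, it factors out of the maximization, and summing it over $y$ with the completeness relation $\sum_y F_y=I$ contributes exactly $\sum_y p^k\trace(F_y)/2^k=p^k$. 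For the remaining terms I would bound $\max_x$ of the weighted sum over $S\neq\emptyset$ by the weighted sum of the individual maxima, then interchange the (finite) sums over $y$ and $S$. This leaves, for each $S\neq\emptyset$, the quantity $\sum_y\max_x\trace(\Phi_S(\rho_A^x)F_y)$, which is at most $2^{\mathcal{Q}(X\rightarrow A)_{\Phi_S(\rho_A)}}$ (our fixed POVM is a feasible point of the supremum), and then at most $2^{\mathcal{Q}(X\rightarrow A)_{\rho_A}}$ by applying Proposition~\ref{prop:dataprocessing} to the channel $\Phi_S$. Summing the coefficients via $\sum_{S\neq\emptyset}(1-p)^{|S|}p^{k-|S|}=1-p^k$ gives, for every POVM,
\begin{align*}
    \sum_{y}\max_{x}\trace\!\left(\mathcal{D}_{p,2}^{\otimes k}(\rho_A^x)F_y\right)\leq p^k+(1-p^k)\,2^{\mathcal{Q}(X\rightarrow A)_{\rho_A}},
\end{align*}
after which taking the supremum over $\{F_y\}_y$ and then $\log_2$ yields the claim.

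The main obstacle I anticipate is purely bookkeeping rather than conceptual: verifying that $\Phi_S$ is genuinely CPTP (so that post-processing applies term-by-term) and keeping the separation of the $S=\emptyset$ term clean, so that the maximally-mixed contribution collapses to precisely $p^k$ through POVM completeness. I would also flag explicitly that the appearance of $p^k$ (rather than $(1-p)^k$, the weight of the identity term) comes from isolating the fully-depolarizing term, and that the loss of equality relative to Proposition~\ref{prop:global} is exactly the max-of-sum step.
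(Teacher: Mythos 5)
Your proof is correct: the subset expansion of $\mathcal{D}_{p,2}^{\otimes k}$ is valid, each $\Phi_S$ is indeed CPTP, the $S=\emptyset$ term collapses to exactly $p^k$ via POVM completeness, and the combination of max-subadditivity, term-wise data processing (Proposition~\ref{prop:dataprocessing}), and the coefficient identity $\sum_{S\neq\emptyset}(1-p)^{|S|}p^{k-|S|}=1-p^k$ delivers the stated bound. Your route differs from the paper's in one structural respect: the paper does not expand the tensor product at all, but instead invokes Lemma~IV.4 of~\cite{hirche2023quantum} to write $\mathcal{D}_{p,2}^{\otimes k}=p^k\frac{I}{2^k}+(1-p^k)\mathcal{M}(p)$ for a \emph{single} quantum channel $\mathcal{M}(p)$ (which is precisely the normalized convex combination $\frac{1}{1-p^k}\sum_{S\neq\emptyset}(1-p)^{|S|}p^{k-|S|}\Phi_S$ of your maps), then repeats the global-channel computation of Proposition~\ref{prop:global} verbatim to obtain the \emph{equality} $\mathcal{Q}(X\rightarrow A)_{\mathcal{D}_{p,2}^{\otimes k}(\rho_A)}=\log_2\bigl(p^k+(1-p^k)\sup_{\{F_y\}}\sum_y\max_x\trace(\mathcal{M}(\rho_A^x)F_y)\bigr)$, and finally applies the post-processing inequality once, to $\mathcal{M}$. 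What the paper's grouping buys is brevity and a cleaner accounting of slack: the only inequality in its argument is that single data-processing step. What your version buys is self-containedness—you never need the external lemma, since you construct the decomposition from scratch—at the cost of two extra sources of slack (the max-of-sum step across subsets, and replacing the fixed POVM by a supremum for each $S$). On that note, your closing remark that the loss of equality ``is exactly the max-of-sum step'' is slightly off even for your own argument: the per-term data-processing bounds $\mathcal{Q}(X\rightarrow A)_{\Phi_S(\rho_A)}\leq\mathcal{Q}(X\rightarrow A)_{\rho_A}$ are also genuine inequalities and can be strict, so the slack is shared among several steps rather than isolated in one.
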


\begin{proof} 
See Appendix~\ref{proof:prop:local}.
\end{proof}

Propositions~\ref{prop:global} and~\ref{prop:local} show that noisy intermediate-scale quantum (NISQ) devices inherently provide security and privacy. This is of course not surprising. The noise in  NISQ devices has been shown to ensure quantum differential privacy~\cite{zhou2017differential, hirche2023quantum} and privacy against hypothesis testing adversaries~\cite{farokhi2023privacy}. Noisy devices can also improve security of quantum machine learning models against adversarial attacks~\cite{du2021quantum,weber2021optimal}. Albeit these guarantees go hand in hand with performance degradation~\cite{resch2021benchmarking}. 

\begin{algorithm}[H]
    \caption{\label{alg:1} Subgradient ascent algorithm for computing maximal quantum leakage. }
    \begin{algorithmic}[1]
        \Require $\{\rho_A^x\}_{x\in\X}$, $\Y=\{1,\dots,d_A^2\}$, $\mu>0$, and $\epsilon>0$
        \Ensure $\{F_y\}$ and $\mathcal{Q}(X\rightarrow A)_{\rho_A}$
        \State $\mathrm{OldCost} \gets \infty$
        \State Random initialization $F_y\succeq 0$, $\forall y\in\Y$
        \State $\mathrm{NewCost} \gets \sum_{y\in\mathbb{Y}} \max_{x\in\X}\trace(\rho_A^x F_y)$
        \While {$|\mathrm{OldCost}-\mathrm{NewCost}|\geq\epsilon$}
            \State $\mathrm{OldCost} \gets\mathrm{NewCost} $
            \State $x^*(y)\gets \argmax_{x\in\X} \trace(\rho_A^x F_y)$, $\forall y\in\Y$
            \State $G_y\gets I+\mu \left(\rho_A^{x^*(y)}-\sum_{z\in\Y} \rho_A^{x^*(z)}F_z\right)$, $\forall y\in\Y$
            \State $\tilde{F}_y \gets G_y^\dag F_y G_y$
            \State $S\gets \sum_{y\in\Y} \tilde{F}_y$
            \State $F_y\gets S^{-1/2}\tilde{F}_yS^{-1/2}$
            \State $\mathrm{NewCost} \gets \sum_{y\in\mathbb{Y}} \max_{x\in\X}\trace(\rho_A^x F_y)$
        \EndWhile
        \State $\mathcal{Q}(X\rightarrow A)_{\rho_A}=\log_2(\mathrm{NewCost})$
        \State \Return $\{F_y\}$ and $\mathcal{Q}(X\rightarrow A)_{\rho_A}$
    \end{algorithmic}
\end{algorithm}

\section{Iterative Algorithm for Computing Maximal Quantum Leakage}
In this section, we follow the approach of~\cite{vrehavcek2005iterative} to compute the maximal quantum leakage using an iterative algorithm. The main difference here is the use of subgradient (as opposed to gradient) ascent. This is due to non-differentiability of the cost function in maximal quantum leakage with respect to the POVMs (due the inner maximization on $x\in\X$); see~\cite[\S\,14.2-14.3]{sun2006optimization} for more information on subgradients and non-smooth optimization. Note that
\begin{subequations}\label{eqn:update_problem}
    \begin{align}
    2^{\mathcal{Q}(X\rightarrow A)_{\rho_A}}=\sup_{\{F_y\}} &\sum_{y\in\mathbb{Y}}  \trace(\rho_A^{x^*(y)} F_y),\\
   \mathrm{s.t.}\; & 0\preceq F_y,y\in\Y, \sum_{y\in\Y} F_y=I,
\end{align}
\end{subequations}
where $\Y=\{1,\dots,d_A^2\}$, $x^*(y)\in \argmax_{x\in\X} \trace(\rho_A^x F_y)$, and  $d_A$ is the dimension of $\mathcal{H}_A$. If $\argmax_{x\in\X} \trace(\rho_A^x F_y)$ admits a unique solution for all $y\in\Y$, the cost function is differentiable at that point and the subgradient and the gradient are equal to each other. If $\argmax_{x\in\X} \trace(\rho_A^x F_y)$ does not admit a unique solution, selecting each solution results in a different subgradient. In fact, any convex combination of said subgradients will  be also a subgradient. Note that after fixing $x^*(y)$, the cost function and constraints of this optimization problem in~\eqref{eqn:update_problem} has the same form as in~\cite{vrehavcek2005iterative} and, therefore, a similar approach can be used to compute the subgradients (instead of gradients). Algorithm~\ref{alg:1} summarizes an iterative subgradient ascent method for computing the maximal leakage. In this algorithm, $\mu>0$ is the step size and can be selected adaptively to keep the cost function increasing or can be set \textit{a priori} small enough to ensure convergence. Furthermore, threshold $\epsilon>0$ is selected to determine termination of the algorithm, i.e., the algorithm is terminated if the improvements in the cost function is not larger than $\epsilon$ (so not enough headway towards an optimal solution is made).

\begin{figure}
\centering
\begin{tikzpicture}
    \node[] at (0,0) {\includegraphics[trim=1.3cm .1cm 1.2cm .5cm,clip,width=.9\columnwidth]{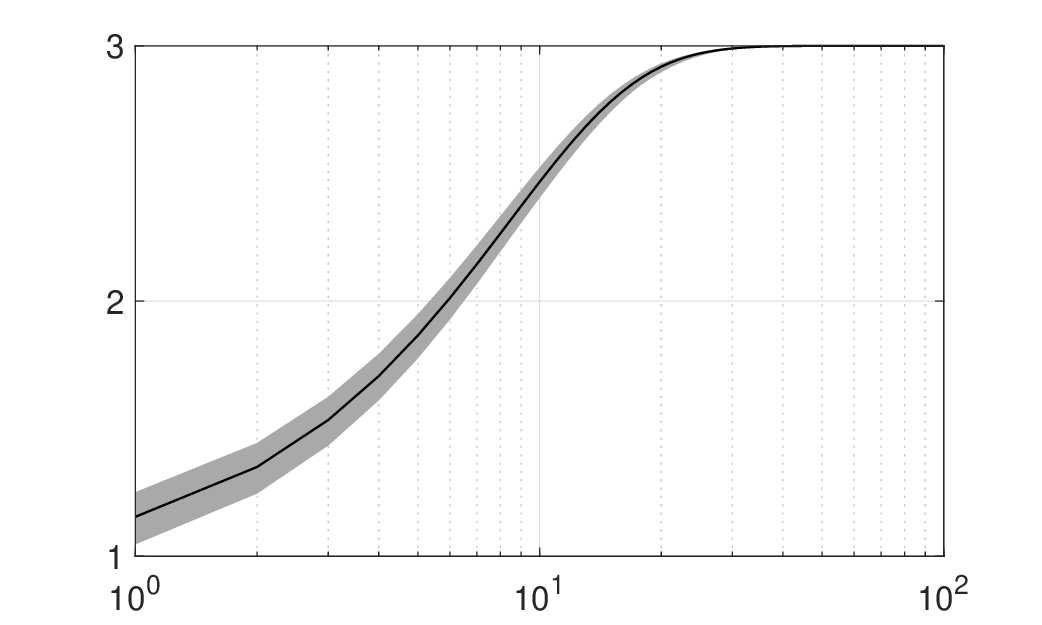}};
    \node[] at (0,-2.8) {Iteration number};
    \node[rotate=90] at (-4.2,0) {$\log_2(\sum_{y\in\mathbb{Y}} \max_{x\in\mathbb{X}} \trace(\rho_A^x F_y))$};
\end{tikzpicture}
\vspace{-3mm}
\caption{Information leakage versus iterations of Algorithm~\ref{alg:1} for index encoding example.}
\label{fig:index}
\end{figure}

To demonstrate the validity of the proposed algorithm, we start by index encoding of classical data in quantum states for which maximal quantum leakage can be computed easily. Consider random variable $X$ with support set $\X=\{1,\dots,d_A\}$, where $d_A=8$ denotes the dimension of $\mathcal{H}_A$. Let $\rho_A^x=\ket{x}\bra{x}$ for all $x\in\X$. In this case, by selecting $F_y=\ket{y}\bra{y}$ for $y\in\{1,\dots,d_A\}$, we get $\log_2(\sum_{y\in\mathbb{Y}}  \trace(\rho_A^{x^*(y)} F_y))=\log_2(d_A)=3$. This is the maximum attainable leakage according to Proposition~\ref{property:2}. Therefore, $\mathcal{Q}(X\rightarrow A)_{\rho_A}=3$ bits. Figure~\ref{fig:index} illustrates the information leakage for iterations of Algorithm~\ref{alg:1} starting from a random POVM. Following Theorem~\ref{tho:maxsize}, we select $d_A^2=64$ as the number of elements in the starting POVM. Furthermore, we use $\mu=10^{-1}$. The grey area demonstrates the maximum and minimum in each iteration (note the randomness in the initialization) and solid black line shows the mean in each iteration. As expected, the algorithm rapidly converges to $\mathcal{Q}(X\rightarrow A)_{\rho_A}=3$. 

\begin{figure}
\centering
\begin{tikzpicture}
    \node[] at (0,0) {\includegraphics[trim=1.3cm .1cm 1.2cm .5cm,clip,width=.9\columnwidth]{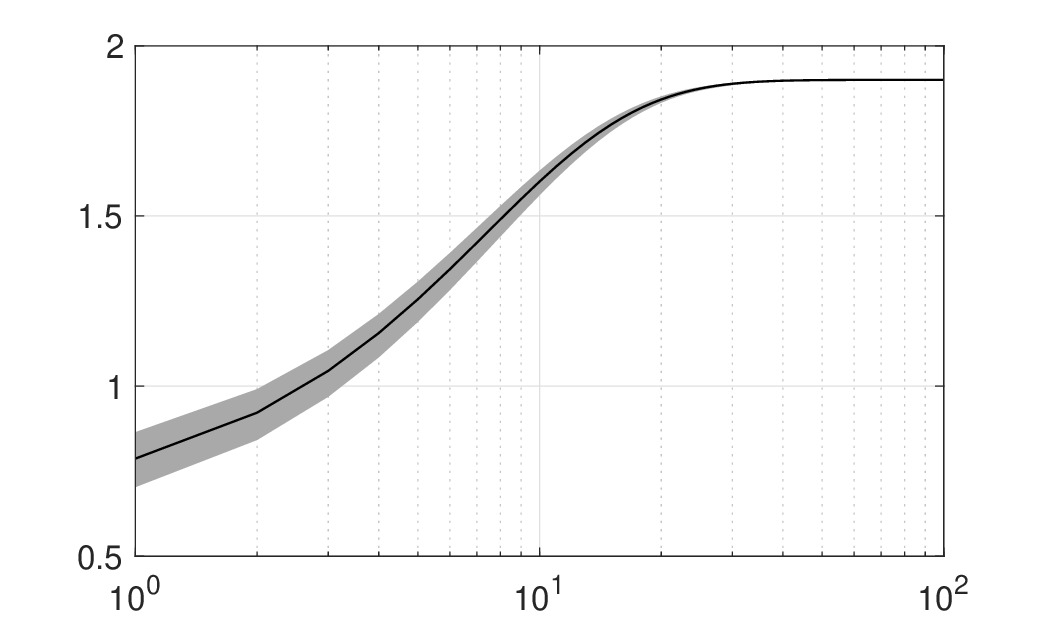}};
    \node[] at (0,-2.8) {Iteration number};
    \node[rotate=90] at (-4.2,0) {$\log_2(\sum_{y\in\mathbb{Y}} \max_{x\in\mathbb{X}} \trace(\rho_A^x F_y))$};
\end{tikzpicture}
\vspace{-3mm}
\caption{Information leakage versus iterations of Algorithm~\ref{alg:1} for amplitude encoding example.}
\label{fig:amplitude}
\end{figure}

Now, we expand our attention to a more complex encoding strategy. Let $X=(X_1,X_2,X_3)\in\{0,1\}^3$. Assume that $\mathcal{H}_A$ is a Hilbert space of dimension $d_A=8$ and $\rho_A^x=\ket{\psi^x}\bra{\psi^x}$, where $\psi^x=x_1\ket{0}+(1-x_1)\ket{1}+x_2\ket{2}+(1-x_2)\ket{3}+x_3\ket{4}+(1-x_3)\ket{5}$. This is more similar to amplitude encoding that is often utilized in quantum machine learning. Figure~\ref{fig:amplitude} shows the information leakage versus iterations of Algorithm~\ref{alg:1} starting from a random POVM. Similarly, we use $\mu=10^{-1}$ and select $d_A^2=64$ as the number of elements in the starting POVM. The grey area demonstrates the maximum and minimum in each iteration  and solid black line shows the mean in each iteration. The algorithm rapidly converges to $\mathcal{Q}(X\rightarrow A)_{\rho_A}=1.9$ bits. Interestingly, amplitude encoding seems to leak less information in comparison with index encoding.

\section{Conclusions and Future Work} \label{sec:conc}
We considered an adversary that is interested in correctly guessing a potentially randomized function of a secret or private data with access to a single copy of the state of a quantum system encoding it. We proposed the novel notion of maximal quantum leakage, which captures the multiplicative increase in the probability of correctly guessing any function of the data upon observing measurements of the quantum state. We proved that maximal quantum leakage satisfies post-processing inequality, independence property, and bounds accessible information. Future work can focus on the following topics:

\begin{figure}
    \centering
    \begin{tikzpicture}
        \node[rectangle,draw,minimum height=1cm] (1) at (0.3,0) {Encoder};
        \node[cloud, draw,minimum height=1cm,minimum width=2.2cm,cloud puffs = 16] (2) at (3,0) {};
        \node[] at (3,0) {Channel 1};
        \node[cloud, draw,minimum height=1cm,minimum width=2.2cm,cloud puffs = 16] (3) at (3,-1.5) {};
        \node[] at (3,-1.5) {Channel 2};
        \draw[->] (-1.2,0) -- (1);
        \draw[->] (1) -- (2);
        \draw[->] (1.5,0) -- (1.5,-1.5) -- (3);
        \draw[->] (2) -- (5,0);
        \draw[->] (3) -- (5,-1.5);
        \node[] at (-1.7,0.05) {Alice};
        \node[] at (-.85,0.2) {$X$};
        \node[] at (1.5,0.2) {$\rho_A^x$};
        \node[] at (4.5,0.2) {$\rho_B^x$};
        \node[] at (5.4,0.05) {Bob};
        \node[] at (4.5,-1.3) {$\rho_E^x$};
        \node[] at (5.4,-1.45) {Eve};
    \end{tikzpicture}
    \caption{A quantum wiretap channel, where Alice wants to communicate effectively with Bob while minimizing the leaked information to Eve.}
    \label{fig:cipher}
\end{figure}
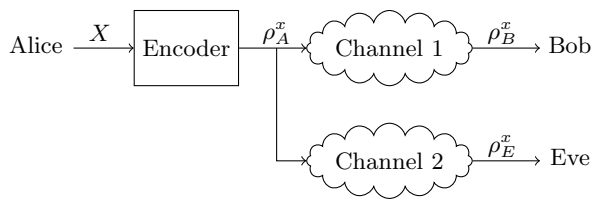

\begin{itemize}
    \item \textit{Quantum Wiretap}: 
    As stated in the introduction, accessible information (related to mutual information) is an appropriate measure of information when considering lossless communication~\cite{holevo1973bounds} while maximal quantum leakage generalizes the assumptions on the adversary and is perfect for investigating eavesdroppers. This motivates using a combination of accessible information and maximal quantum leakage in wiretap or obfuscation channels. Figure~\ref{fig:cipher} illustrates a wiretap channel, where Alice wants to communicate effectively with Bob while minimizing the leaked information to Eve. Here, communication channels can be any completely-positive trace-preserving mappings. Alice's strategy in the wiretap channel can be computed by finding an encoding policy $x\mapsto \rho_A^x$ that maximizes $I_{acc}(\{p_X(x),\rho_B^x\}_{x\in\X})$ (i.e., the rate of information transfer to Bob) subject to $\mathcal{Q}(X\rightarrow E)_{\rho_E}\leq \varepsilon$ for small constant $\varepsilon\geq 0$ (i.e., restricts the amount of leaked information to Eve). 
    \item \textit{Generalization in Quantum Machine Learning}: Classical maximal leakage has been already utilized to better understand generalization of machine learning models~\cite{esposito2021generalization,esposito2019new}. Therefore, we expect to be able to use maximal quantum leakage to analyze generalization of various quantum machine learning models.
    \item \textit{Privacy-Preserving Quantum Computing}: Privacy analysis in quantum system is relatively new with only recent studies on quantum differential privacy and puffer-fish privacy~\cite{zhou2017differential,hirche2023quantum,aaronson2019gentle,nuradha2023quantum,farokhi2023privacy}. In many scenarios differential privacy can result in conservative results and bad performance. Information-theoretic privacy~\cite{makhdoumi2014information} can provide a systematic approach to balancing privacy and utility in general settings. The proposed notion of information leakage can provide an operational measure of privacy leakage for balancing utility and privacy in quantum systems.
\end{itemize}

\bibliography{ref}

\appendix

\section{Proofs of All the Presented Results} \label{sec:proofs}
The proofs of all the results in Sections~\ref{sec:leakage}--\ref{sec:noise} are presented across the following subsections. The proofs are moved to this section to minimize interruptions to the flow of the paper and to focus on definitions and properties without excessive mathematics. 
\subsection{Proof of Theorem~\ref{tho:mql}}
\label{proof:tho:mql}
 The proof follows from that
 \begin{align*}
     \sup_{Z,\widehat{Z}} \log_2 &\left(\frac{\mathbb{P}\{Z=\widehat{Z}\}}{\displaystyle \max_{z\in\mathbb{Z}}\mathbb{P}\{Z=z\}} \right)\\
     &=\log_2\left(\sum_{y\in\mathbb{Y}} \max_{x\in\mathbb{X}}\mathbb{P}\{Y=y\,|\,X=x\} \right)\\
     &=\log_2\left(\sum_{y\in\mathbb{Y}} \max_{x\in\mathbb{X}} \trace(\rho_A^x F_y) \right),
 \end{align*}
 where the first equality follows from Theorem~1 in~\cite{issa2019operational} and the second equality is a direct consequence of~\eqref{eqn:conditional}. 

 \subsection{Proof of Theorem~\ref{tho:maxsize}}
 \label{proof:tho:maxsize}
    The proof follows the same line of reasoning as the proof of Theorem~1 in~\cite{davies1978information}, which is reformulated in Proposition~5.8 in~\cite{Holevo2019}.
 
    Let $\mathcal{F}_k$ denote the set of POVMs with $k$ outcomes, i.e., $\mathcal{F}_k:=\{\{F_y\}_{y=1}^k|F_y\succeq 0,\sum_{y=1}^k F_y=I \}$. Define $g:\cup_{k\geq 0} \mathcal{F}_k\rightarrow \mathbb{R}_{\geq 0}$ as
     \begin{align*}
         g(F):=\sum_{y=1}^k \max_{x\in\mathbb{X}} \trace(\rho_A^x F_y).
     \end{align*}
     where $F=\{F_y\}_{y=1}^k$. Evidently, 
     $$
     \mathcal{Q}(X\rightarrow A)_{\rho_A}=\log_2\left(\sup_{F\in \cup_{k\geq 0} \mathcal{F}_k}g(F)\right).
     $$ 
     The following lemma extends Lemma~2 in~\cite{davies1978information} to $g(\cdot)$. 
  
  \begin{lemma} \label{lemma:decompose} Consider $F=\{F_y\}_{y=1}^k\in\mathcal{F}_k$. Let $F':=\{F'_y\}_{y=1}^{k+1}\in\mathcal{F}_{k+1}$ be such that $F'_y=F_y$ for all $y\in\{1,\dots,k\}\setminus\{y_0\}$ for some $y_0\in\{1,\dots,k\}$ and $F'_{y_0},F'_{k+1}\succ 0$ satisfy $F_{y_0}=F'_{y_0}+F'_{k+1}$. Then, $g(F')\geq g(F)$. 
  \end{lemma}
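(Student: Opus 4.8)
The plan is to exploit the fact that $F$ and $F'$ agree on every outcome except $y_0$, which is split into the two positive pieces $F'_{y_0}$ and $F'_{k+1}$. First I would write out both objective values explicitly,
\begin{align*}
 g(F) &= \sum_{y\neq y_0} \max_{x\in\X}\trace(\rho_A^x F_y) + \max_{x\in\X}\trace(\rho_A^x F_{y_0}),\\
 g(F') &= \sum_{y\neq y_0} \max_{x\in\X}\trace(\rho_A^x F_y) + \max_{x\in\X}\trace(\rho_A^x F'_{y_0}) + \max_{x\in\X}\trace(\rho_A^x F'_{k+1}),
\end{align*}
where the sums run over $y\in\{1,\dots,k\}\setminus\{y_0\}$ and I have already used $F'_y=F_y$ for those indices. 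The common sums cancel, so the claim $g(F')\geq g(F)$ reduces to the single inequality
\begin{align*}
 \max_{x\in\X}\trace(\rho_A^x F'_{y_0}) + \max_{x\in\X}\trace(\rho_A^x F'_{k+1}) \geq \max_{x\in\X}\trace(\rho_A^x F_{y_0}).
\end{align*}

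The core step is then a pointwise-versus-maximum argument. I would let $x^\star\in\argmax_{x\in\X}\trace(\rho_A^x F_{y_0})$ be a maximizer on the right-hand side and invoke linearity of the trace together with the splitting identity $F_{y_0}=F'_{y_0}+F'_{k+1}$ to write
\begin{align*}
 \max_{x\in\X}\trace(\rho_A^x F_{y_0}) = \trace(\rho_A^{x^\star} F'_{y_0}) + \trace(\rho_A^{x^\star} F'_{k+1}).
\end{align*}
Each summand on the right is the value of the corresponding objective at the particular index $x^\star$, hence it is bounded above by the maximum of that objective over all $x\in\X$. Adding the two bounds gives exactly the required inequality, and therefore $g(F')\geq g(F)$.

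I do not expect a genuine obstacle here: the result is structurally the statement that ``splitting a POVM element cannot decrease $g$,'' and it rests only on linearity of the trace and on the elementary fact that evaluating a function at one point lower-bounds its maximum. The one conceptual point worth flagging is that the maximizer $x^\star$ for the combined element $F_{y_0}$ need not coincide with the maximizers for the refined elements $F'_{y_0}$ and $F'_{k+1}$; the refinement may ``align'' different secret values $x$ with different outcomes, and this misalignment is precisely what can make the inequality strict. That is the intuition behind why refinement helps, and it is also why, as noted after Theorem~\ref{tho:mql}, the sum over $y$ and the maximization over $x$ cannot be exchanged. The positivity hypotheses $F'_{y_0},F'_{k+1}\succ 0$ are not needed for the inequality itself; they only ensure that $F'$ is a bona fide $(k+1)$-outcome POVM so that $g(F')$ is defined on $\mathcal{F}_{k+1}$.
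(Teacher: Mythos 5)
Your proof is correct and follows essentially the same route as the paper: both reduce the claim to the single inequality $\max_{x}\trace(\rho_A^x(F'_{y_0}+F'_{k+1}))\leq \max_{x}\trace(\rho_A^x F'_{y_0})+\max_{x}\trace(\rho_A^x F'_{k+1})$, which the paper invokes directly and you verify by evaluating at a maximizer $x^\star$ and bounding each term by its own maximum. Your closing remarks (on strictness and on the role of the positivity hypotheses) are accurate but not needed for the argument.
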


  \begin{proof}
    Without loss of generality, up to rearranging the order of the elements in the POVMs, we can assume that $y_0=k$. Note that
      \begin{align*}
          g(F)
          =&\sum_{y=1}^k \max_{x\in\mathbb{X}} \trace(\rho_A^x F_y) \\
          =&\sum_{y=1}^{k-1} \max_{x\in\mathbb{X}} \trace(\rho_A^x F'_y)+ \max_{x\in\mathbb{X}} \trace(\rho_A^x (F'_{k}+F'_{k+1})) \\
          \leq& \sum_{y=1}^{k-1} \max_{x\in\mathbb{X}} \trace(\rho_A^x F'_y)+ \max_{x\in\mathbb{X}} \trace(\rho_A^x F'_{k})\\
          &+\max_{x\in\mathbb{X}}\trace(\rho_A^x F'_{k+1})\\
          =&\sum_{y=1}^{k+1} \max_{x\in\mathbb{X}} \trace(\rho_A^x F'_y),\\
          =&g(F'),
      \end{align*}
      where the inequality follows from
      \begin{align*}
            \max_{x\in\mathbb{X}} \trace(\rho_A^x &(F'_{k}+F'_{k+1}))\\
            &=\max_{x\in\mathbb{X}} (\trace(\rho_A^x F'_{k})+\trace(\rho_A^x F'_{k+1}))\\
            &\leq \max_{x\in\mathbb{X}} \trace(\rho_A^x F'_{k})+\max_{x\in\mathbb{X}}\trace(\rho_A^x F'_{k+1}),
        \end{align*}
        This concludes the proof.
  \end{proof}

  \begin{lemma} \label{lemma:convex}
      $F\mapsto g(F)$ is a convex on $\mathcal{F}_k$ for any $k\geq 0$.
  \end{lemma}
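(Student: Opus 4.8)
The plan is to exploit the elementary fact that a pointwise maximum of linear (hence affine) functions is convex, together with the fact that a finite sum of convex functions is convex. For each fixed $x\in\X$ the map $F_y\mapsto\trace(\rho_A^x F_y)$ is linear, so the inner maximization $F_y\mapsto\max_{x\in\X}\trace(\rho_A^x F_y)$ is convex, and $g$ is nothing more than the sum of $k$ such terms, each depending on its own block $F_y$. Thus convexity of $g$ should follow almost immediately once the structure is laid out.

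First I would record that the domain is convex: $\mathcal{F}_k$ is cut out by the linear constraints $F_y\succeq 0$ and $\sum_{y=1}^k F_y=I$, so for any $F,F'\in\mathcal{F}_k$ and $\lambda\in[0,1]$ the block-wise convex combination $\lambda F+(1-\lambda)F'$ again lies in $\mathcal{F}_k$. This guarantees that the convexity inequality is meaningful, i.e. that $g$ is evaluated at a legitimate POVM on the right-hand side.

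Next, fixing $F=\{F_y\}_{y=1}^k$ and $F'=\{F'_y\}_{y=1}^k$ in $\mathcal{F}_k$ and $\lambda\in[0,1]$, I would use linearity of the trace to write, for each $y$,
$$\trace\bigl(\rho_A^x(\lambda F_y+(1-\lambda)F'_y)\bigr)=\lambda\,\trace(\rho_A^x F_y)+(1-\lambda)\,\trace(\rho_A^x F'_y).$$
Taking the maximum over $x\in\X$ and invoking that the maximum of a sum never exceeds the sum of the maxima gives
$$\max_{x\in\X}\trace\bigl(\rho_A^x(\lambda F_y+(1-\lambda)F'_y)\bigr)\leq \lambda\max_{x\in\X}\trace(\rho_A^x F_y)+(1-\lambda)\max_{x\in\X}\trace(\rho_A^x F'_y).$$
Summing this over $y\in\{1,\dots,k\}$ yields $g(\lambda F+(1-\lambda)F')\leq \lambda g(F)+(1-\lambda)g(F')$, which is precisely convexity.

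There is essentially no hard step here; the only point requiring a moment's care is the subadditivity of the maximum, namely $\max_x(a_x+b_x)\leq \max_x a_x+\max_x b_x$, which holds because each summand is bounded by its own maximum. I would state this explicitly to keep the argument self-contained, and observe that it is the same inequality that already drives the bound used in the proof of Lemma~\ref{lemma:decompose}.
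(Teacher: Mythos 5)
Your proof is correct and follows essentially the same route as the paper's: both verify that $\mathcal{F}_k$ is convex, expand the trace by linearity, apply subadditivity of the maximum $\max_x(a_x+b_x)\leq\max_x a_x+\max_x b_x$ termwise, and sum over $y$. Your framing of $g$ as a sum of pointwise maxima of linear functions is a nice conceptual gloss, but the underlying argument is identical.
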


  \begin{proof} Consider $F=\{F_y\}_{y=1}^k$ and $F'=\{F'_y\}_{y=1}^k$ such that $F,F'\in\mathcal{F}_k$. Evidently, $F'':=\{\lambda F_y+(1-\lambda)F'_y\}_{y=1}^k$ also belongs to $\mathcal{F}_k$ for all $\lambda\in[0,1]$ because $\lambda F_y+(1-\lambda)F'_y\succeq 0$ for all $y\in\{0,\dots,k\}$ and 
  \begin{align*}
      \sum_{y=1}^k (\lambda F_y+(1-\lambda)F'_y)
      =\lambda \sum_{y=1}^k F_y+(1-\lambda)\sum_{y=1}^k F'_y=I.
  \end{align*}
  Furthermore,
  \begin{align*}
      g(F'')
      &=\sum_{y=1}^k \max_{x\in\mathbb{X}} \trace(\rho_A^x (\lambda F_y+(1-\lambda)F'_y))\\
      &=\sum_{y=1}^k \max_{x\in\mathbb{X}} (\lambda\trace(\rho_A^xF_y)+(1-\lambda)\trace(\rho_A^xF'_y))\\
      &\leq \sum_{y=1}^k (\max_{x\in\mathbb{X}} \lambda\trace(\rho_A^xF_y)+\max_{x\in\mathbb{X}}(1-\lambda)\trace(\rho_A^xF'_y))\\
      &=\lambda\sum_{y=1}^k \max_{x\in\mathbb{X}} \trace(\rho_A^xF_y)+(1-\lambda)\sum_{y=1}^k\max_{x\in\mathbb{X}}\trace(\rho_A^xF'_y)\\
      &=\lambda g(F)+(1-\lambda)g(F').
  \end{align*}
    This concludes the proof.
  \end{proof}

      Fix $k\geq 0$. Because $g(F)$ is continuous in $F$ and $\mathcal{F}_k$ is compact, $\sup_{F\in\mathcal{F}_k} g(F)$ is attained on the set $\mathcal{F}_k$. Assume that $\hat{F}=\{\hat{F}_y\}_{y=1}^k$ maximizes $g$ on $\mathcal{F}_k$. By removing zero components, if necessary, we obtain POVM $\tilde{F}=\{\tilde{F}_y\}_{y=1}^\ell$ with $k\geq \ell$ such that $g(\hat{F})=g(\tilde{F})$. Using spectral decomposition, we can decompose each element of the POVM $\tilde{F}$ into a sum of rank-one elements, i.e., $\tilde{F}_y=\sum_{z=1}^{z_y} \bar{F}_z$, where $\bar{F}_z\succ 0$ are rank one matrices. Construct POVM $\bar{F}:=\{\{\bar{F}_z\}_{z=1}^{z_y}\}_{y=1}^\ell$. Let $m=|\bar{F}|$. By repeatedly using Lemma~\ref{lemma:decompose}, we can see that $g(\bar{F})\geq g(\hat{F})$. Because of Lemma~\ref{lemma:convex}, $g$ is convex on $\mathcal{F}_m$. Therefore, the maximizing observable $\check{F}$ is an extreme point of the set $\mathcal{F}_m$. Theorem 2.21 in~\cite{Holevo2019} shows that $\check{F}$ must be linearly independent and Carath\'{e}odory Theorem (see, e.g., Theorem 1.6~in~\cite{Holevo2019}) shows that $\check{F}$ can be represented as a convex combination of at most $d_A^2$ elements, where $d_A$ is the dimension of $\mathcal{H}_A$. Therefore, without loss of generality, by removing some zero components, $\check{F}\in\mathcal{F}_{d_A^2}$. Since $k$ was chosen arbitrary, $\sup_{F\in \cup_{k\geq 0} \mathcal{F}_k}g(F)$ is attained on an observable from the compact convex set $\mathcal{F}_{d_A^2}$. 

\subsection{Proof of Proposition~\ref{prop:independence}: Establishing Independence Property}
    \label{proof:prop:independence}
    Define $I_\infty(X;Y):=\sum_{y\in\mathbb{Y}} \max_{x\in\mathbb{X}} \mathbb{P}\{Y=y\,|\,X=x\}=\sum_{y\in\mathbb{Y}} \max_{x\in\mathbb{X}} \trace(\rho_A^x F_y)$. Note that $I_\infty(X;Y)\geq 0$~\cite[Lemma~1]{issa2019operational} and, as a result, $\mathcal{Q}(X\rightarrow A)_{\rho_A}=\sup_{\{F_y\}_y} I_\infty(X;Y)\geq 0$. 

    Now, we prove that $\mathcal{Q}(X\rightarrow A)_{\rho_A}=0$ if $(\rho_A^x)_{x\in\mathbb{X}}$ is indistinguishable.  Consider $x,x'\in\X$ such that $x\neq x'$. We have 
    \begin{align*}
        \mathbb{P}\{Y=y\,|\,X=x\}
        &=\trace(\rho_A^x F_y)\\
        &=\trace(\rho_A^{x'} F_y)\\
        &=\mathbb{P}\{Y=y\,|\,X=x'\},
    \end{align*}
    where the first and the third equality follow from~\eqref{eqn:conditional} and the second equality follows from indistinguishablity of $(\rho_A^x)_{x\in\mathbb{X}}$, i.e., $\rho_A^x=\rho_A^{x'}$ for all $x,x'\in\X$. Therefore, random variables $X$ and $Y$ are independent, which implies that $I_\infty(X;Y)=0$~\cite[Lemma~1]{issa2019operational}, which is true irrespective of the choice of $\{F_y\}_y$. This implies that $\mathcal{Q}(X\rightarrow A)_{\rho_A}=\sup_{\{F_y\}_y}I_\infty(X;Y)=0$. 

    Next, we prove that $(\rho_A^x)_{x\in\mathbb{X}}$ is indistinguishable if $\mathcal{Q}(X\rightarrow A)_{\rho_A}=0$. Because $\mathcal{Q}(X\rightarrow A)_{\rho_A}=0$, it must be that $I_\infty(X;Y)=0$ for all POVM $F=\{F_y\}_y$. Following~\cite[Lemma~1]{issa2019operational}, it must be that $X$ and $Y$ are independent for all POVM $F$. We have 
    \begin{align*}
        \trace(\rho_A^x F_y)
        &=\mathbb{P}\{Y=y\,|\,X=x\}\\
        &=\mathbb{P}\{Y=y\,|\,X=x'\}\\
        &=\trace(\rho_A^{x'} F_y),
    \end{align*}
    where the second equality follows from statistical independence of $X$ and $Y$. Therefore, $\trace((\rho_A^x-\rho_A^{x'}) F_y)=0$ for all $0\preceq F_y\preceq I$. This implies that $\rho_A^x-\rho_A^{x'}=0$, or equivalently $\rho_A^x=\rho_A^{x'}$, which concludes the proof. 

    \subsection{Proof of Proposition~\ref{property:2}}
    \label{proof:property:2}
    Let $\{F_y\}_y$ be the maximizing POVM in Theorem~\ref{tho:maxsize}. Then, $\mathcal{Q}(X\rightarrow A)=I_\infty(X;Y)$. Following~\cite[Lemma~1]{issa2019operational}, $I_\infty(X;Y)\leq \min\{|\X|,|\Y|\}$. Furthermore, $|\Y|=m\leq d_A^2$. Therefore, $\mathcal{Q}(X\rightarrow A)_{\rho_A}\leq \min\{\log_2(|\X|),\log_2(d_A^2)\}$.

    \subsection{Proof of Proposition~\ref{prop:dataprocessing}: Establishing Data Processing Inequality}
    \label{proof:prop:dataprocessing}
    Note that
    \begin{align*}
        \mathcal{Q}(X\rightarrow A)_{\mathcal{E}(\rho_A)}=\sup_{\{F_y\}\in\mathcal{F}} &\log_2\left(\sum_{y\in\mathbb{Y}} \max_{x\in\mathbb{X}} \trace(\mathcal{E}(\rho_A^x) F_y) \right),
    \end{align*}
    where $\mathcal{F}$ denotes the set of all POVMs. According to Choi-Kraus theorem~\cite[Theorem~4.4.1]{wilde2013quantum}, for each quantum channel $\mathcal{E}$, there exists a family of linear operators $\{E_j\}_{j=1}^n$ for some $n\in\mathbb{N}$ such that $\sum_{j=1}^n E_j^\dag E_j=I$ and $\mathcal{E}(\rho)=\sum_{j=1}^n E_j\rho E_j^\dag$ for all density operators $\rho$. Consider any POVM $F=\{F_y\}_{y=1}^m$. We have
    \begin{align*}
        \trace(\mathcal{E}(\rho_A^x)F_y)
        &=\trace\left(\sum_{j=1}^n E_j\rho_A^x E_j^\dag F_y\right)\\
        &=\trace\left(\rho_A^x \left(\sum_{j=1}^n  E_j^\dag F_yE_j\right)\right)\\
        &=\trace\left(\rho_A^x \bar{F}_y\right),
    \end{align*}
    where 
    \begin{align*}
        \bar{F}_y:=\sum_{j=1}^n  E_j^\dag F_yE_j.
    \end{align*}
    Define $\bar{\mathcal{F}}:=\{\{\bar{F}_y\}_y\,|\,\bar{F}_y=\sum_{j=1}^n   E_j^\dag F_yE_j,\{F_y\}_y\in\mathcal{F}\}$. Evidently,  by construct, 
    \begin{align*}
        \mathcal{Q}(X\rightarrow A)_{\mathcal{E}(\rho_A)}=\sup_{\{\bar{F}_y\}\in\bar{\mathcal{F}}} \log_2\left(\sum_{y\in\mathbb{Y}} \max_{x\in\mathbb{X}} \trace(\rho_A^x \bar{F}_y) \right).
    \end{align*}
    Let us prove that $\bar{\mathcal{F}}\subset \mathcal{F}$. Let $\{\bar{F}_y\}_y\in \bar{\mathcal{F}}$. Then, there must exist $\{F_y\}_y\in\mathcal{F}$ such that $\bar{F}_y=\sum_{j=1}^n   E_j^\dag F_yE_j$. Note that $\bar{F}_y\succeq 0$ because $F_y\succeq 0$. Furthermore, $\bar{F}_y=\sum_{j=1}^n  E_j F_y E_j^\dag \preceq  I$ because $F_y\preceq I$. Furthermore,
    \begin{align*}
        \sum_{y} \bar{F}_y
        &=\sum_{y} \sum_{j=1}^n   E_j^\dag F_yE_j\\
        &= \sum_{j=1}^n   E_j^\dag \left(\sum_{y} F_y \right) E_j
        \\
        &= \sum_{j=1}^n   E_j^\dag E_j\\
        &=I,
    \end{align*}
    where the third equality follows from that $\sum_{y} F_y=I$ and the last equality follows from that $\sum_{j=1}^n E_j^\dag E_j=I$. Therefore, $\{\bar{F}_y\}_{y}$ must belong to $\mathcal{F}$, which proves that $\bar{\mathcal{F}}\subset \mathcal{F}$. This implies that
    \begin{align*}
        \mathcal{Q}(X\rightarrow A)_{\mathcal{E}(\rho_A)}
        &=\sup_{\{\bar{F}_y\}\in\bar{\mathcal{F}}} \log_2\left(\sum_{y\in\mathbb{Y}} \max_{x\in\mathbb{X}} \trace(\rho_A^x \bar{F}_y) \right)\\
        &\leq \sup_{\{\bar{F}_y\}\in \mathcal{F}} \log_2\left(\sum_{y\in\mathbb{Y}} \max_{x\in\mathbb{X}} \trace(\rho_A^x \bar{F}_y) \right)\\
        &=\mathcal{Q}(X\rightarrow A)_{\rho_A},
    \end{align*}
    where the inequality follows from that taking the supremum over a larger set results in a larger value. 
    

    \subsection{Proof of Proposition~\ref{prop:relation_accessible_info}: Relationship Between Accessible Information and Maximal Quantum Leakage}
    \label{proof:prop:relation_accessible_info}
    Note that $I_\infty(X;Y):=\sum_{y\in\mathbb{Y}} \max_{x\in\mathbb{X}} \mathbb{P}\{Y=y\,|\,X=x\}=\sum_{y\in\mathbb{Y}} \max_{x\in\mathbb{X}} \trace(\rho_A^x F_y)$~\cite[Theorem~1]{issa2019operational}. From~\cite[Lemma~2]{issa2019operational}, we know that $I_\infty(X;Y)\geq I(X;Y)$. The rest follows from taking supremum with respect to POVM $\{F_y\}_y$ on both sides of the inequality $I_\infty(X;Y)\geq I(X;Y)$.

\subsection{Proof of Proposition~\ref{prop:global}: Effect of Global Depolarizing Channel}
\label{proof:prop:global}
Note that
\begin{align*}
        \trace(\mathcal{D}_{p,d_A}(\rho_A^x)F_y)
        &=\trace\left(\left(\frac{p}{d_A}I+(1-p)\rho_A^x\right)F_y\right)\\
        &=\frac{p}{d_A}\trace(F_y) +(1-p)\trace\left(\rho_A^x F_y\right).
    \end{align*} 
Therefore, 
    \begin{align*}
     \sum_{y\in\mathbb{Y}} &\max_{x\in\mathbb{X}} \trace(\mathcal{D}_{p,d_A}(\rho_A^x)F_y)\\
     &=\sum_{y\in\mathbb{Y}} \max_{x\in\mathbb{X}} \left( \frac{p}{d_A}\trace(F_y) +(1-p)\trace\left(\rho_A^x F_y\right)\right)\\
     &=  \frac{p}{d_A}\sum_{y\in\mathbb{Y}}\trace(F_y) +(1-p)\sum_{y\in\mathbb{Y}}\max_{x\in\mathbb{X}}\trace\left(\rho_A^x F_y\right)\\
     &=  \frac{p}{d_A}\trace\bigg(\sum_{y\in\mathbb{Y}} F_y\bigg) +(1-p)\sum_{y\in\mathbb{Y}}\max_{x\in\mathbb{X}}\trace\left(\rho_A^x F_y\right)\\
     &=  \frac{p}{d_A}\trace(I) +(1-p)\sum_{y\in\mathbb{Y}}\max_{x\in\mathbb{X}}\trace\left(\rho_A^x F_y\right)\\
     &=  p +(1-p)\sum_{y\in\mathbb{Y}}\max_{x\in\mathbb{X}}\trace\left(\rho_A^x F_y\right).
    \end{align*}
This implies that
    \begin{align*}
        \mathcal{Q}(X\rightarrow A&)_{\mathcal{D}_{p,d_A}(\rho_A)}\\
        &=\log_2\bigg(\sup_{\{F_y\}_y}\sum_{y\in\mathbb{Y}} \max_{x\in\mathbb{X}} \trace(\mathcal{D}_{p,d_A}(\rho_A^x)F_y)\bigg)\\
        &=\log_2\bigg( p +(1-p)\sup_{\{F_y\}_y} \sum_{y\in\mathbb{Y}}\max_{x\in\mathbb{X}}\trace\left(\rho_A^x F_y\right)\!\!\bigg)\\
        &=\log_2( p +(1-p)2^{\mathcal{Q}(X\rightarrow A)_{\rho_A}}).
    \end{align*}

\subsection{Proof of Proposition~\ref{prop:local}: Effect of Local Depolarizing Channel}
\label{proof:prop:local}
Following the proof of Lemma~IV.4 in~\cite{hirche2023quantum}, a local depolarizing noise channel can be always represented as
\begin{align*}
    \mathcal{D}_{p,2}^{\otimes k}=p^k\frac{I}{2^k}+(1-p^k)\mathcal{M}(p),
\end{align*}
where $\mathcal{M}$ is an appropriately selected quantum channel, i.e., completely positive and trace preserving mapping. Using the same line of reasoning as in the proof of Proposition~\ref{prop:global} in Subsection~\ref{proof:prop:global}, we get 
\begin{align}
        \nonumber\mathcal{Q}&(X\rightarrow A)_{\mathcal{D}_{p,2}^{\otimes k}(\rho_A)}\\
        &=\log_2\bigg( p^k\! +\!(1\!-\!p^k)\!\sup_{\{F_y\}_y} \!\sum_{y\in\mathbb{Y}}\max_{x\in\mathbb{X}}\trace\left(\mathcal{M}(\rho_A^x) F_y\right)\!\!\bigg).\label{eqn:proof:1}
    \end{align}
From Proposition~\ref{prop:dataprocessing}, we know that 
\begin{align}
\nonumber\sup_{\{F_y\}_y} \sum_{y\in\mathbb{Y}}&\max_{x\in\mathbb{X}}\trace\left(\mathcal{M}(\rho_A^x) F_y\right)
\\&\leq 
\sup_{\{F_y\}_y} \sum_{y\in\mathbb{Y}}\max_{x\in\mathbb{X}}\trace\left(\rho_A^x F_y\right).\label{eqn:proof:2}
\end{align}
Combining~\eqref{eqn:proof:1} and~\eqref{eqn:proof:2} gives
\begin{align*}
        \mathcal{Q}&(X\rightarrow A)_{\mathcal{D}_{p,2}^{\otimes k}(\rho_A)}\\
        &\leq\log_2\bigg( p^k +(1-p^k)\sup_{\{F_y\}_y} \sum_{y\in\mathbb{Y}}\max_{x\in\mathbb{X}}\trace\left(\rho_A^x F_y\right)\!\!\bigg)\\
        &=\log_2( p^k +(1-p^k)2^{\mathcal{Q}(X\rightarrow A)_{\rho_A}}).
\end{align*}

\end{document}